\newtheorem{theorem}{Theorem}
\title{Stability of Cantilever-like Structures with Applications to Soft Robot Arms}
\author{Siva Prasad Chakri Dhanakoti}
\author{ \href{https://orcid.org/0000-0001-8346-4289}{\includegraphics[scale=0.06]{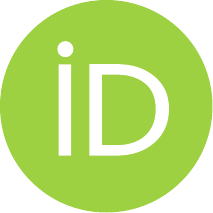}\hspace{1mm} Siva Prasad Chakri Dhanakoti}\\
Department of Mathematics and Computer Science\\ Freie Universität Berlin\\ Berlin Germany\\
	\texttt{chakri.dhanakoti@gmail.com} \\
}
\date{}
\begin{document}

\maketitle

\begin{abstract}
The application of variational principles for analyzing problems in the physical sciences is widespread. Cantilever-like problems, where one end is fixed and the other end is free, have received less attention in terms of their stability despite their prevalence. In this article, we establish stability conditions for these problems by examining the second variation of the energy functional through the generalized Jacobi condition. This requires computing conjugate points determined by solving a set of initial value problems from the linearized equilibrium equations. We apply these conditions to investigate the nonlinear stability of intrinsically curved elastic cantilevers subject to an end load. The rod deformations are modelled using Kirchhoff rod theory. The role of intrinsic curvature in inducing complex nonlinear phenomena, such as snap-back instability, is particularly emphasized. The numerical examples highlight its dependence on the system parameters. These examples illustrate potential applications in the design of flexible soft robot arms and innovative mechanisms.
\end{abstract}

\keywords{Conjugate points \and Jacobi Condition \and Elastic Rods \and Soft Robot Arm \and Hysteresis \and Snap-back Instability \and Intrinsic Curvature} 

\section{Introduction}
\label{sec:s1}

Many problems in physics can be analyzed using the calculus of variations framework which has a rich history. For mechanical systems, equilibrium configurations are solutions to the force and moment balance equations, which, in this framework, can be stated as the critical points of an energy functional. The study of slender structures falls within this category and has captivated researchers since the time of Euler and Bernoulli~\cite{Matsutani2010}. Slender structures like telephone cords, ropes, cables, and
hair are found everywhere. Nonlinear rod theories have been effectively used to study the large deformations in these structures, highlighting their relevance across various fields, including Biology, Physics, and Engineering. These include small-scale domains like DNA~\cite{Manning_DNA}, bacterial locomotion~\cite{Park_Kim_Lim_2019}, nanorods~\cite{singh2022computational} as well as large-scale domains like plant tendrils~\cite{mcmillen2000}, curly hair~\cite{Miller2014}, and architecture designs~\cite{Hafner2021}.

 In recent times, soft robotics has increasingly employed slender rods to create compliant mechanisms~\cite{Rucker2010,chen2020soft}. Inspiration is often drawn from mechanisms such as octopus tentacles or elephant trunks to utilize them in real-life applications, as they are capable of increased manipulation and dexterity~\cite{Laschi2012,MajidiCarmel2014}. Advances in modern material science enabling the production of highly deformable polymers and alloys have further accelerated their development. Furthermore, the computer graphics community has been increasingly enthusiastic about using rod models to simulate realistic animations of structures like trees~\cite{Oreilly2012} and hair~\cite{romero2021physical}.

 Generally, a flexible rod can exhibit multiple equilibrium states, raising natural questions about its stability. The variational structure relates the stability of equilibria to the local minima of the functional. In the calculus of variations problems with a classical case of fixed (or Dirichlet) boundary conditions, the absence of conjugate points termed as the Jacobi condition is a well-known necessary condition for the critical points to be local minimum~\cite{bolza1904lectures, gelfand1963calculus}. Using this approach, numerous studies have examined the stability of elastic rods, covering various cases~\cite{Maddocks1984, Manning1998, Hoffman2002}. Alternative semi-analytic approaches have also been investigated on planar problems with exact analytical solutions~\cite{kuznetsov2002, Levyakov2010, Batista2015}. Nevertheless, the second-order conditions for cantilever-like problems, which have fixed-free ends, remain very obscure. For such problems, Jacobi conditions are largely restricted to one-dimensional cases~\cite{Oreilly2011, Batista2016}. These conditions have also been equivalently illustrated using the optimal control framework~\cite{Oreilly2011}. Stability studies have been conducted on planar elasticae, which are one-dimensional and have analytical solutions~\cite{Levyakov2010,Batista2015}. These conditions must be extended to higher dimensions to investigate three-dimensional elastic rod deformations. In this article, we generalize these Jacobi conditions to higher dimensions and complex boundary conditions, broadening their scope to a wider range of problems.

Cantilever structures are of particular interest because of their presence across multiple disciplines of technology. A notable example is flexible robotic arms, where one end is attached to the robot body and the other to a payload. Similarly, animations often depict cantilever structures such as trees or hair, where one end remains fixed, and the other end interacts with external forces. Many of these structures are characterized by an intrinsic curvature, resulting in complex mechanics and geometrically nonlinear behavior. In this article, we employ the developed Jacobi condition to assess the stability of the cantilever equilibria.

 The stability can be related to the associated dynamic stability of the evolutionary system and has also been examined in this context~\cite{goriely1997nonlinear,kumar2010generalized}, which includes computing eigenvalues of linearized dynamical equations. The existence of unstable equilibria suggests the possibility of snap-back instability, where an unstable equilibrium abruptly transitions to the adjacent stable equilibrium. When a naturally straight elastic rod clamped at one end and with a dead load attached to the other end is rotated using the clamp, it exhibits a snap-back instability for an appropriate combination of its length and load. This well-known catapult behavior can be harnessed in the design of soft robot arms and triggering mechanisms~\cite{Armanini2017}. Likewise, catapult behavior is possible in intrinsically curved elastic rods, i.e., one stable spatial equilibrium snaps to another spatial equilibrium, when its clamped end is rotated. However, intrinsic curvature introduces complexity, leading to non-intuitive geometrically nonlinear behavior in elastic rods. One notable consequence is the ratcheting behavior while transmitting torque-guided tubes in angioplasty~\cite{Warner1997}
or in machine shafts~\cite{VETYUKOV2023104431}. The complex shapes of curly hair~\cite{Miller2014} are also attributed to this effect. Moreover, intrinsic curvature improves the reachout of soft robotic arms compared to tip-loaded naturally straight rods~\cite{Sipos2020}. In this article, we study intrinsically curved cantilever structures with a dead load at the tip and examine their stability properties using the Jacobi condition. In particular, we assess the potentially arising snap-back instability as its clamped end is rotated. We perform a quasi-static analysis and infer the conditions of snap-back instability without resorting to discussion on their dynamics. This study offers better insight into the resulting nonlinearities and may aid in the improved design of soft robotic arms and innovative mechanisms.

 An outline of this paper is as follows: In Section~\ref{sec:s2}, we introduce the classic unconstrained calculus of variations problem with 
fixed-fixed ends and show the conditions for the local minima. This analysis is then extended to cases with fixed-free ends. In Section \ref{sec:s3}, the cantilever problem is formulated using the Kirchhoff Rod theory~\cite{antman2006nonlinear}, and equilibrium equations are derived using the Hamiltonian formulation of elastic rods~\cite{Dichmann1996}. The second variation form for these equilibria is then deduced, and a numerical strategy to compute the conjugate points is formulated. This involves evaluating a stability matrix based on the solutions to a set of initial value problems (IVP). In Section~\ref{sec:s4}, we present some examples of intrinsically curved cantilevers along with stability results, highlighting the snap-back instability. Finally, Section~\ref{sec:s5} provides a summary and discussion of the results.

\section{Calculus of Variations Problem}
\label{sec:s2}
\subsection{Fixed-Fixed Ends}

In this section, we present the standard conjugate point theory for the unconstrained calculus of variations problem and extend it to the non-classical case of fixed-free ends. The results presented here are well established in the calculus of variations literature and can be found in standard textbooks~\cite{bolza1904lectures,gelfand1963calculus}.
Let $ \boldsymbol \zeta:s\longrightarrow \mathbb{R}^{p} $ be a continuous and differentiable function in the interval $[0,l]$, where $p$ denotes the dimension of the problem. Given a continuous mapping  $\mathcal{L}:\mathbb{R}^{p} \times \mathbb{R}^{p} \times [0,l]  \rightarrow \mathbb{R}$, the standard calculus of variations problem is to minimize the functional
\begin{align}
 \label{eqn:Functional}
  J(\boldsymbol \zeta)= \int_{0}^{l} \mathcal{L}(\boldsymbol \zeta,\boldsymbol \zeta^{\prime},s)ds, \quad  \text{subject to } \boldsymbol \zeta(0)= f_{o}, \quad \boldsymbol \zeta(l)=f_{l}.
\end{align}
 The notation $()^{\prime}$ denotes the derivative with respect to $s$. The function $\mathcal{L}$ is assumed to have continuous second derivatives with respect to all its arguments and to be convex with respect to its second argument.  Critical points or equilibria $\boldsymbol \zeta_{o}$ that are expressed as solutions to the Euler-Lagrange equations:
 \begin{align}
\label{eqn:First_variation}
-\left( \frac{\partial \mathcal{L}}{\partial \boldsymbol \zeta^{\prime}}\right)^{\prime} + \frac{\partial \mathcal{L}}{\partial \boldsymbol \zeta} =\mathbf{0},
 \end{align}
 can be classified as local minima if the second variation functional evaluated as
\begin{align}
\label{eqn:second_variation}
   \delta ^{2}J(\boldsymbol \zeta_{o})[\mathbf{h}] =\frac{1}{2} \int_{0}^{l}\left( \mathbf{h}^{\prime} \cdot \mathbf{P} \mathbf{h}^{\prime} +  \mathbf{h} \cdot \mathbf{C} \mathbf{h}^{\prime}+  \mathbf{h}^{\prime}  \cdot \mathbf{C}^{T} \mathbf{h}+ \mathbf{h} \cdot  \mathbf{Q} \mathbf{h} \right)ds ,
  \end{align}
  is non-negative. The notation $\cdot$ represents the standard dot-product between two column vectors $(\mathbf{u} \cdot \mathbf{v}= \mathbf{u}^{T} \mathbf{v})$. Here, $\mathbf{P}=\mathcal{L}_{ \boldsymbol \zeta^{\prime}  \boldsymbol\zeta^{\prime} }( \boldsymbol \zeta_{o},  \boldsymbol \zeta_{o}^{\prime},s),\mathbf{C}= \mathcal{L}_{ \boldsymbol \zeta  \boldsymbol\zeta^{\prime} }( \boldsymbol \zeta_{o},  \boldsymbol \zeta_{o}^{\prime},s)$ and $\mathbf{Q}=\mathcal{L}_{ \boldsymbol \zeta  \boldsymbol \zeta}( \boldsymbol \zeta_{o},  \boldsymbol \zeta_{o}^{\prime},s)$ are $p \times p$ Hessian matrices evaluated at the critical point $\boldsymbol \zeta_{o}$. The matrices $\mathbf{P}$ and $\mathbf{Q}$ are symmetric, whereas the matrix $\mathbf{C}$ may not be. We assume that the Legendre's strengthened condition is valid throughout:
  \begin{align}
  \label{eqn:Legendre_condition}
\mathbf{P} > 0
  \end{align}
  i.e., the matrix $\mathbf{P}$ is positive definite. Here, $\mathbf{h}$ is a variation in the solutions and must satisfy the linearized boundary conditions
  \begin{align}
  \label{eqn:BCs_fixed_fixed}
\mathbf{h}(0)=\mathbf{0}= \mathbf{h}(l).
\end{align}
After integration by parts on \eqref{eqn:second_variation}, the second variation simplifies to the form 
\begin{align}
    \label{eqn:second_variation_fixed_fixed}
    \delta^{2}J(\boldsymbol \zeta_{o})[\mathbf{h}] \equiv \frac{1}{2} \int_{0}^{l} \mathcal{S} \mathbf{h} \cdot  \mathbf{h} ds, 
\end{align}
where $\mathcal{S}$ is the second-order vector self-adjoint differential \emph{Jacobi operator}:
\begin{align}
\label{eqn:Jacobi_operator}
\mathcal{S} \mathbf{h} = - \frac{d}{ds} \left( \mathbf{P} \mathbf{h}^{\prime} + \mathbf{C}^{T}\mathbf{h} \right) + \mathbf{C} \mathbf{h}^{\prime} + \mathbf{Q}\mathbf{h},
\end{align}
This system of Ordinary Differential Equations (ODEs), together with the boundary conditions~\eqref{eqn:BCs_fixed_fixed} is also referred to as the \emph{accessory boundary value problem} or \emph{Jacobi differential equations}, and its solutions are known as \emph{accessory extremals}. 

Given Legendre's strengthened condition~\eqref{eqn:Legendre_condition}, the second variation~\eqref{eqn:second_variation_fixed_fixed} is non-negative if Jacobi's condition is satisfied i.e., the critical point $\boldsymbol \zeta_{o}$ has no conjugate point. A conjugate point is defined as a point $\sigma \in [0,l]$ for which there exists a non-trivial solution satisfying:
\begin{align}
\mathcal{S} \mathbf{h}=\mathbf{0},\quad  \mathbf{h}(0)= \mathbf{h}(\sigma)=\mathbf{0}.
\label{eqn:conjugate_points}
\end{align}

\subsection{Fixed-Free Ends}
The focus of this article is to extend this well-known Jacobi condition~\eqref{eqn:conjugate_points} to cases where one end is fixed, while the other end remains free. In this case, the problem involves minimizing a functional of the form:
\begin{align}
 \label{eqn:Functional_fixed_free}
  J(\boldsymbol \zeta)= \int_{0}^{l} \mathcal{L}(\boldsymbol \zeta,\boldsymbol \zeta^{\prime},s)ds + B(\boldsymbol \zeta(l)), \qquad   \text{subject to    } \boldsymbol \zeta(0)=f_{o},
\end{align}
while the other end $s=l$ is set free. The boundary term $B$ is a continuous function of the state $\boldsymbol \zeta$ at $s=l$ and has continuous derivatives. The first order condition for the stationary points yields the Euler-Lagrange equations~\eqref{eqn:First_variation}, along with the additional natural boundary conditions at the free end $s=l$:
\begin{align}
\label{eqn:first_variation}
     \left( \frac{\partial \mathcal{L}}{\partial \boldsymbol \zeta^{\prime}}  + \frac{\partial  B}{\partial \boldsymbol \zeta} \right) \bigg|_{s=l} =\mathbf{0}.
\end{align}
The critical points correspond to the local minimum if the second variation evaluated as
  \begin{align}
  \begin{split}
          \delta ^{2}J(\zeta_{o})[\mathbf{h}]= & \int_{0}^{l} \left(- \frac{d}{ds} \left( \mathbf{P} \mathbf{h}^{\prime} +\mathbf{C}^{T}\mathbf{h} \right) + \mathbf{C} \mathbf{h}^{\prime} + \mathbf{Q}\mathbf{h} \right) \cdot \mathbf{h} ds \\
         & \qquad \qquad + \Big[\left( \mathbf{P} \mathbf{h}^{\prime} +\mathbf{C}^{T}\mathbf{h} \right)\cdot \mathbf{h} \Big]_{0}^{l}+ \mathbf{B} \mathbf{h}(l) \cdot \mathbf{h}(l) 
    \end{split}
  \end{align}
  is non-negative. The matrices $\mathbf{P}$,$\mathbf{C}$ and $\mathbf{Q}$ remain as previously defined, while the new matrix $\mathbf{B}=\frac{\partial^2 B }{\partial \boldsymbol \zeta^{2}}$ is symmetric. Although we explicitly derived most expressions for the case of fixed-fixed ends, their form remains nearly identical for the case of fixed-free ends. In this case, the variation $\mathbf{h}$ must satisfy the linearized boundary condition given by
    \begin{align}
    \label{eqn:Linearized_BCs}
\mathbf{h}(0)=\mathbf{0}=\mathbf{P} \mathbf{h}^{\prime}(l) +\mathbf{C}^{T}\mathbf{h}(l)+ \mathbf{B} \mathbf{h}(l).
\end{align}
On integrating by parts, followed by the vanishing  boundary terms, the second variation functional becomes
\begin{align}
    \label{eqn:second_variation_fixed_free}
    \delta^{2}J(\boldsymbol \zeta_{o})[\mathbf{h}] \equiv \frac{1}{2} \int_{0}^{l} \mathcal{S} \mathbf{h} \cdot  \mathbf{h} ds, 
\end{align}
where $\mathcal{S}$ is the same second order differential self-adjoint operator as define in~\eqref{eqn:Jacobi_operator}. Given Legendre's strengthened condition~\eqref{eqn:Legendre_condition}, the second-variation functional \eqref{eqn:second_variation_fixed_free} is non-negative if it satisfies the Jacobi condition, namely the absence of conjugate points. However, the definition of a conjugate point is slightly modified in this scenario. A conjugate point is defined as a point $l^{*} \in [0,l]$ for which there is a non-trivial solution to 
\begin{align}
\label{eqn:conjugate_point}
\mathcal{S}\mathbf{h}= \mathbf{0}, \quad \mathbf{h}(l^{*})= \mathbf{0}, \quad \mathbf{P} \mathbf{h}^{\prime}(l) +\mathbf{C}^{T}\mathbf{h}(l)+ \mathbf{B} \mathbf{h}(l)=\mathbf{0}.
\end{align}
Unlike the fixed-fixed case, the boundary where the natural boundary conditions (in the present case $s=l$) are given must be explicitly chosen when specifying the boundary conditions. The boundary condition at $s=l$ is accommodated by using a basis of solutions for $\mathbf{h}(l)$. In the subsequent sections, proofs supporting the revised definition of conjugate points are provided. \\

\subsubsection{Necessary condition} 
\begin{theorem}
If the matrix $\mathbf{P}$ is strictly positive definite, and the interval $[0,l]$ contains no point conjugate to $l$, then the  second variation quadratic functional $\delta^{2}J(\boldsymbol \zeta_{o})[\mathbf{h}] $ is positive for all $\mathbf{h}(s) $ satisfying the boundary conditions~\eqref{eqn:Linearized_BCs}.
\end{theorem}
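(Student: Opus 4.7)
My plan is to adapt the classical Riccati-substitution proof of the Jacobi sufficient condition (standard in the fixed-fixed case) to accommodate the free end and its boundary contribution $\mathbf{B}\mathbf{h}(l)\cdot\mathbf{h}(l)$. The strategy is to construct a symmetric matrix function $\mathbf{W}(s)$ on $[0,l]$ with two properties: (i) adding the total derivative $\frac{d}{ds}(\mathbf{h}\cdot\mathbf{W}\mathbf{h})$ to the integrand of~\eqref{eqn:second_variation} completes the quadratic form into a perfect square of the shape $(\mathbf{h}^{\prime}+\mathbf{P}^{-1}(\mathbf{W}+\mathbf{C}^{T})\mathbf{h})\cdot\mathbf{P}(\mathbf{h}^{\prime}+\mathbf{P}^{-1}(\mathbf{W}+\mathbf{C}^{T})\mathbf{h})$, which is nonnegative by the Legendre condition~\eqref{eqn:Legendre_condition}; and (ii) $\mathbf{W}(l)=\mathbf{B}$, so that the boundary contribution from the added total derivative exactly cancels the $\mathbf{B}$-boundary term in $\delta^{2}J$.

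To build $\mathbf{W}$, I first construct a $p\times p$ matrix solution $\mathbf{H}(s)$ of the Jacobi equation $\mathcal{S}\mathbf{H}=\mathbf{0}$ whose columns individually satisfy the linearised natural boundary condition $\mathbf{P}\mathbf{H}^{\prime}(l)+\mathbf{C}^{T}\mathbf{H}(l)+\mathbf{B}\mathbf{H}(l)=\mathbf{0}$ at $s=l$. This is obtained by fixing $\mathbf{H}(l)$ to be the identity, setting $\mathbf{H}^{\prime}(l)=-\mathbf{P}^{-1}(l)(\mathbf{C}^{T}(l)+\mathbf{B})$, and integrating the linear system backwards to $s=0$. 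Under the no-conjugate-point hypothesis, $\mathbf{H}(s)$ is invertible on $[0,l]$: a zero $s_{0}$ of $\det\mathbf{H}$ would produce a nonzero $\mathbf{v}\in\ker\mathbf{H}(s_{0})$, making $\mathbf{h}(s):=\mathbf{H}(s)\mathbf{v}$ a nontrivial solution of $\mathcal{S}\mathbf{h}=\mathbf{0}$ with $\mathbf{h}(s_{0})=\mathbf{0}$ and satisfying the natural BC, contradicting~\eqref{eqn:conjugate_point}. Define
\begin{equation*}
\mathbf{W}(s):=-\bigl(\mathbf{P}\mathbf{H}^{\prime}+\mathbf{C}^{T}\mathbf{H}\bigr)\mathbf{H}^{-1}.
\end{equation*}
Evaluation at $s=l$ using the BC gives $\mathbf{W}(l)=\mathbf{B}$, and differentiating $\mathbf{W}\mathbf{H}=-(\mathbf{P}\mathbf{H}^{\prime}+\mathbf{C}^{T}\mathbf{H})$ together with $\mathcal{S}\mathbf{H}=\mathbf{0}$ yields the matrix Riccati equation $\mathbf{W}^{\prime}+\mathbf{Q}=(\mathbf{W}+\mathbf{C})\mathbf{P}^{-1}(\mathbf{W}+\mathbf{C}^{T})$. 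Symmetry of $\mathbf{W}$ follows from a Lagrange identity for the self-adjoint operator $\mathcal{S}$: the skew-symmetric matrix $\mathbf{H}^{T}(\mathbf{P}\mathbf{H}^{\prime}+\mathbf{C}^{T}\mathbf{H})-(\mathbf{P}\mathbf{H}^{\prime}+\mathbf{C}^{T}\mathbf{H})^{T}\mathbf{H}$ is $s$-independent, and at $s=l$ it vanishes because $\mathbf{B}=\mathbf{B}^{T}$; therefore it vanishes identically, which is equivalent to $\mathbf{W}=\mathbf{W}^{T}$.

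With the Riccati equation and symmetry in hand, a direct expansion verifies
\begin{equation*}
\mathbf{h}^{\prime}\cdot\mathbf{P}\mathbf{h}^{\prime}+2\mathbf{h}\cdot\mathbf{C}\mathbf{h}^{\prime}+\mathbf{h}\cdot\mathbf{Q}\mathbf{h}+\frac{d}{ds}(\mathbf{h}\cdot\mathbf{W}\mathbf{h})=\bigl(\mathbf{h}^{\prime}+\mathbf{P}^{-1}(\mathbf{W}+\mathbf{C}^{T})\mathbf{h}\bigr)\cdot\mathbf{P}\bigl(\mathbf{h}^{\prime}+\mathbf{P}^{-1}(\mathbf{W}+\mathbf{C}^{T})\mathbf{h}\bigr)\ge 0.
\end{equation*}
Integrating over $[0,l]$, the total-derivative contribution is $\mathbf{h}(l)\cdot\mathbf{W}(l)\mathbf{h}(l)-\mathbf{h}(0)\cdot\mathbf{W}(0)\mathbf{h}(0)=\mathbf{B}\mathbf{h}(l)\cdot\mathbf{h}(l)$, using $\mathbf{h}(0)=\mathbf{0}$ and $\mathbf{W}(l)=\mathbf{B}$; this exactly cancels the boundary term $\mathbf{B}\mathbf{h}(l)\cdot\mathbf{h}(l)$ in $\delta^{2}J$. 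Thus $\delta^{2}J(\boldsymbol\zeta_{o})[\mathbf{h}]$ reduces to a positive multiple of the integral of the nonnegative perfect square, hence is nonnegative. For strict positivity, equality forces $\mathbf{h}^{\prime}+\mathbf{P}^{-1}(\mathbf{W}+\mathbf{C}^{T})\mathbf{h}=\mathbf{0}$ almost everywhere; combined with $\mathbf{h}(0)=\mathbf{0}$ and uniqueness for linear first-order IVPs, this gives $\mathbf{h}\equiv\mathbf{0}$.

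The main obstacle will be establishing the symmetry of $\mathbf{W}$ on the full interval. Unlike the fixed-fixed case, where one can prescribe both endpoint values of $\mathbf{H}$ freely, here the boundary behaviour of $\mathbf{H}$ is tied to the natural BC at $s=l$, and it is precisely the symmetry of $\mathbf{B}$ that causes the Wronskian-type skew-symmetric quantity to vanish there. A secondary technical point is ensuring $\mathbf{H}(s)$ remains nonsingular throughout $[0,l]$; this is exactly what the modified definition~\eqref{eqn:conjugate_point} guarantees, so the no-conjugate-point hypothesis enters at the most delicate step of the argument.
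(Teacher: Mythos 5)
Your proposal is correct and follows essentially the same route as the paper: complete the square via a Riccati substitution $\mathbf{W}=-(\mathbf{P}\mathbf{H}^{\prime}+\mathbf{C}^{T}\mathbf{H})\mathbf{H}^{-1}$, where $\mathbf{H}$ is a matrix solution of the Jacobi equation anchored by the linearised natural boundary condition at $s=l$, with invertibility of $\mathbf{H}$ on $[0,l]$ supplied by the no-conjugate-point hypothesis. Your write-up additionally nails down two points the paper leaves implicit --- the symmetry of $\mathbf{W}$ via the constancy of the Wronskian-type quantity $\mathbf{H}^{T}(\mathbf{P}\mathbf{H}^{\prime}+\mathbf{C}^{T}\mathbf{H})-(\mathbf{P}\mathbf{H}^{\prime}+\mathbf{C}^{T}\mathbf{H})^{T}\mathbf{H}$, and strict positivity from uniqueness of the first-order IVP --- but the underlying argument is the same.
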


\begin{proof}
This proof is a generalized version of the classical case of fixed-fixed ends presented in~\cite[page~150]{gelfand1963calculus}. Let $\mathbf{W}:s \longrightarrow \mathbb{R}^{p \times p} $ be an arbitrary differentiable symmetric matrix. Without affecting the value of the second variation integral~\eqref{eqn:second_variation}, the following term can be added
  \begin{align*}
  \begin{split}
 0= \int_{0}^{l}  \frac{d}{ds}\left( \mathbf{W} \mathbf{h} \cdot\mathbf{h}\right) ds - \big[\mathbf{W} \mathbf{h} \cdot\mathbf{h}\big] _{0}^{l}=  \int_{0}^{l}  &\mathbf{W}^{\prime} \mathbf{h} \cdot\mathbf{h} +  \mathbf{W} \mathbf{h}^{\prime} \cdot\mathbf{h} +  \mathbf{W} \mathbf{h} \cdot\mathbf{h}^{\prime} ds \\& 
 - \big( \mathbf{W}(l) \mathbf{h}(l) \cdot\mathbf{h}(l) -  \mathbf{W}(0) \mathbf{h}(0) \cdot\mathbf{h}(0) \big).
\end{split}
\end{align*}
The matrix function $\mathbf{W}(s)$ is chosen such that the boundary terms vanish. For the current case with fixed-free ends, we have $\mathbf{h}(0)=\mathbf{0}$ at the fixed end, and additionally we require that $\mathbf{h}(l) \cdot \left(\mathbf{W}(l) \mathbf{h}(l) + \mathbf{B} \mathbf{h}(l) \right)=0$ at the free end. The latter condition holds for any non-trivial $\mathbf{h}(l)$ when $\mathbf{W}(l) + \mathbf{B}$ is a zero matrix of order $p$, denoted by $\mathbf{O}$. This condition also holds if $\mathbf{W}(l)  + \mathbf{B} $ is a skew-symmetric matrix, but it contradicts the symmetric matrix assumption of $\mathbf{W}(s)$ and, therefore, is disregarded. Then, the integral~\eqref{eqn:second_variation} becomes 
  \begin{align}
   \label{eqn:second_variation_integral}
      \delta^{2} J(\boldsymbol \zeta_{o})[\mathbf{h}] &=  \int_{0}^{l} \mathbf{P}  \mathbf{h}^{\prime} \cdot \mathbf{h}^{\prime} + \left(\mathbf{C} + \mathbf{W} \right) \mathbf{h} \cdot \mathbf{h}^{\prime} +  \left(\mathbf{C} + \mathbf{W} \right)^{T}  \mathbf{h}^{\prime}  \cdot \mathbf{h} + \left( \mathbf{Q}  + \mathbf{W}^{\prime}  \right)\mathbf{h} \cdot \mathbf{h}  ds.
  \end{align}
The integrand can be expressed as a perfect square of the form 
    \begin{align*}
    \delta^{2} J(\boldsymbol \zeta_{o})[\mathbf{h}]=
      \int_{0}^{l} \left( \mathbf{P} ^{1/2} \mathbf{h}^{\prime} + \left(\mathbf{Q} + \mathbf{W}^{\prime}\right)^ {1/2} \mathbf{h}  \right)\cdot  \left( \mathbf{P} ^{1/2} \mathbf{h}^{\prime} + \left(\mathbf{Q} + \mathbf{W}^{\prime }   \right)^{1/2} \mathbf{h} \right)ds,
  \end{align*}
  if the matrix $\mathbf{W}(s)$ is chosen to be the solution of
 \begin{align}
\mathbf{Q} + \mathbf{W}^{\prime} = \left( \mathbf{C} + \mathbf{W} \right) \mathbf{P}^{-1}\left( \mathbf{C} ^{T}+ \mathbf{W} \right) .
     \label{eqn:Matrix_Ricatti_eqn}
   \end{align}
Since $\mathbf{P}$ is assumed to be a positive-definite symmetric matrix, its square root $\mathbf{P}^{1/2}$ exists and is also positive-definite. Moreover, its inverse $\mathbf{P}^{-1/2}$ exists. The expression~\eqref{eqn:Matrix_Ricatti_eqn} is called as \emph{Matrix Ricatti equation}, and the second variation integral~\eqref{eqn:second_variation_integral} takes the form
 \begin{align}
    \delta^{2}J(\boldsymbol \zeta_{o})[\mathbf{h}] = \int_{0}^{l} \mathbf{P} \left( \mathbf{h}^{\prime} + \mathbf{P}^{-1/2}\left(\mathbf{Q} + \mathbf{W} \right) \mathbf{Q}^{-1/2}\mathbf{h}\right) \cdot  \left( \mathbf{h}^{\prime} + \mathbf{P}^{-1/2}\left(\mathbf{Q} + \mathbf{W} \right) \mathbf{Q}^{-1/2} \mathbf{h}\right) ds,
  \end{align} where the integrand is a perfect square and is always non-negative. The second variation is zero only when the expression
  \[
  \mathbf{P}^{1/2} \mathbf{h}^{\prime} + \left(\mathbf{Q}+ \mathbf{W} \right)\mathbf{Q}^{-1/2}\mathbf{h}
  \]
  vanishes, and this is only possible for the trivial solution $\mathbf{h}(s)=\mathbf{0}$. If the \emph{Matrix Ricatti equation}~\eqref{eqn:Matrix_Ricatti_eqn} has a continuous solution $\mathbf{W}(s)$ defined over the interval $[0,l]$, then the second variation is positive-definite. Substituting 
  \begin{align}
 \label{eqn:theorem_inverse}
  \mathbf{P} \mathbf{U}^{\prime} + \mathbf{C}^{T} \mathbf{U} + \mathbf{W} \mathbf{U}    =\mathbf{O},
    \end{align} 
  where $\mathbf{U}$ is an unknown matrix results in
  \begin{align}
  \label{eq:Euler_LinearizedEL}
      -\frac{d}{ds}\left(\mathbf{P}\mathbf{U}^{\prime} + \mathbf{C}^{T} \mathbf{U}\right) +\left( \mathbf{C} \mathbf{U}^{\prime} + \mathbf{Q} \mathbf{U} \right)=\mathbf{O} ,
  \end{align}
  which is the matrix form of the \emph{Jacobi operator} $\mathcal{S}$. Now, consider the free end, where the matrix $\mathbf{W}(l) + \mathbf{B}$ is chosen to be a zero matrix, and the relation~\eqref{eqn:theorem_inverse} yields the boundary condition
  \begin{align}
  \label{eqn:Ricatti_BCs}
     \mathbf{P}\mathbf{U}^{\prime}(l)+\mathbf{C}^{T} \mathbf{U}(l) + \mathbf{B}\mathbf{U}(l)= \mathbf{O}. 
  \end{align}
This is a matrix form of the linearized natural boundary conditions at the boundary $s=l$. The columns in matrix $\mathbf{U}$ can be interpreted as the basis of the variations $\mathbf{h}$. If $[0,l]$ contains no point \emph{conjugate} to $l$, then~\eqref{eq:Euler_LinearizedEL} has a solution $\mathbf{U}(s)$ which is non-singular in $[0,l]$. Therefore, the \emph{Matrix Ricatti equation} \eqref{eqn:Matrix_Ricatti_eqn} has a solution given by~\eqref{eqn:theorem_inverse}, satisfying~\eqref{eqn:Ricatti_BCs}. Thus, there exists a matrix $\mathbf{W}(s)$ that transforms the integrand to a perfect square, producing a non-negative second variation $\delta ^{2}J(\boldsymbol \zeta_{o})[\mathbf{h}]$.
 \end{proof}

\subsubsection{Sufficient condition} 
\begin{theorem}
If the matrix $\mathbf{P}$ is positive-definite and the interval $[0,l]$ contains a point conjugate to $l$, then the second variation quadratic functional $\delta^{2}J(\boldsymbol \zeta_{o})[\mathbf{h}] $ is not positive for all $\mathbf{h}$ satisfying the boundary conditions~\eqref{eqn:Linearized_BCs}.
\end{theorem}

\begin{proof}
Suppose there exists a point $s=l^{*}$ \emph{conjugate} to $s=l$ in $0<s<l$. Consequently, there exists a non-null accessory extremal $\mathbf{h}(s)$ satisfying $\mathbf{h}(l^{*})=\mathbf{0} $ and $\mathbf{P}\mathbf{h}^{\prime}(l) + \mathbf{C}^{T}\mathbf{h}(l) + \mathbf{B} \mathbf{h}(l)=\mathbf{0} $. Let $\boldsymbol \gamma(s)$ be a continuous arc defined as
\begin{align*}
\boldsymbol \gamma(s)=
\begin{cases}
\mathbf{0},  &0<s<l^{*},\\
 \mathbf{h}(s),&l^{*}<s<l,
\end{cases}
\end{align*}
and is depicted in Figure~\ref{fig:Broken_Extremal}. The second variation $\delta^{2}J(\boldsymbol \zeta_{o})$ along the arc $\boldsymbol \gamma$ is given by
\begin{align*}
\delta^{2}J(\boldsymbol \zeta_{o})[\boldsymbol \gamma]&=\frac{1}{2} \int_{0}^{l} \mathcal{S} \boldsymbol \gamma(s) \cdot \boldsymbol \gamma(s)  ds=\frac{1}{2}\int_{l^{*}}^{l} \mathcal{S}\mathbf{h}(s) \cdot \mathbf{h}(s)  ds, \\
&=\frac{1}{2}\left[\left(\mathbf{P}(s)\mathbf{h}^{\prime}(s)+ \mathbf{C}^{T}(s) \mathbf{h}(s)\right) \cdot \mathbf{h}(s)  \right]_{l}^{l^{*}} + \mathbf{B} \mathbf{h}(l)\cdot \mathbf{h}(l) = 0.
\end{align*}

\begin{figure}[tbh]
    \centering
    \includegraphics[width=0.8\textwidth]{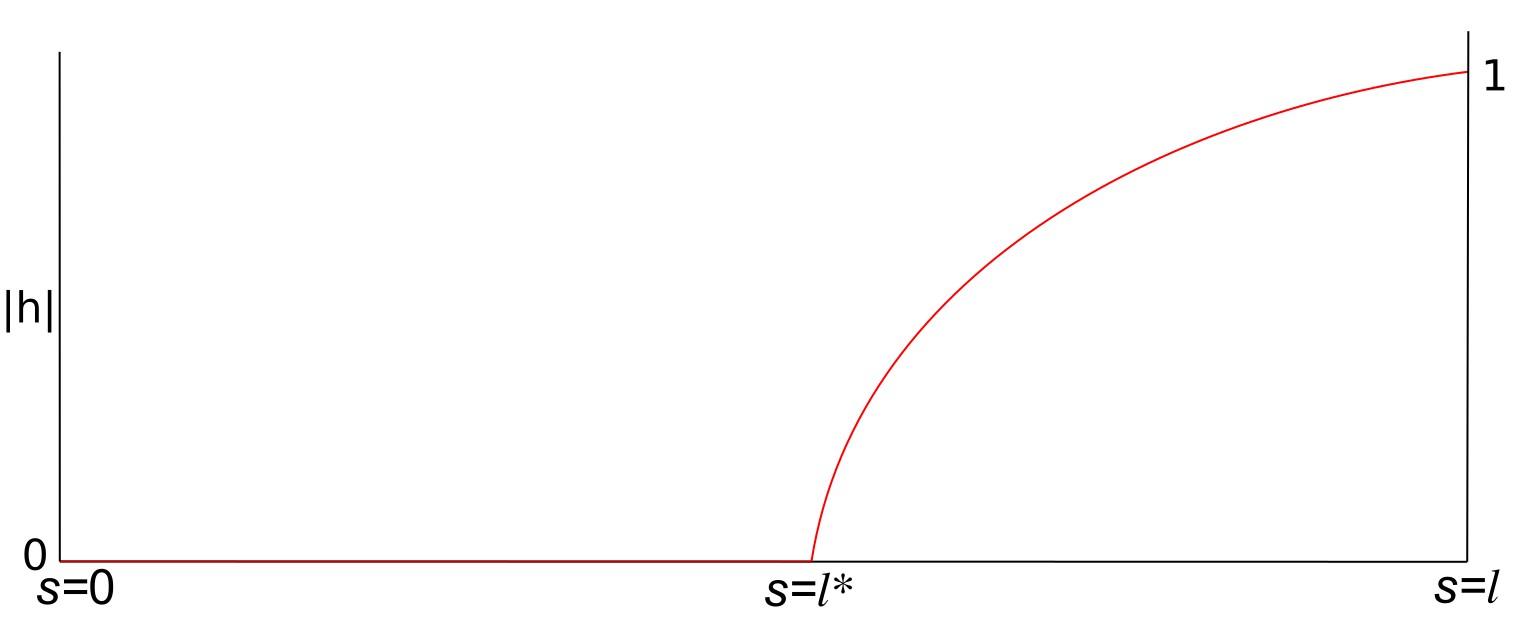}
    \caption{The broken accessory extremal $ \gamma(s) $ satisfying the boundary conditions~\eqref{eqn:Linearized_BCs} }
    \label{fig:Broken_Extremal}
\end{figure}

However, this arc $\boldsymbol \gamma(s)$ has a \emph{corner point} at $s=l^{*}$, as $\mathbf{h}^{\prime}(l^{*}-) \neq \mathbf{h}^{\prime}(l^{*}+)$. If the matrix $\mathbf{P}(s)$ is positive definite, then by \emph{Weierstrass-Erdmann conditions}~\cite{bolza1904lectures}, the arc $\gamma$ with a \emph{corner point} cannot be the local minimizer. But, the second variation functional  $\delta^{2}J$ is zero along the present \emph{broken extremal} $\gamma(s)$. Therefore, there must exist another arc $\mathbf{h}(s)$ satisfying the boundary conditions~\eqref{eqn:Linearized_BCs} which is a local minimizer and further reduces the second variation $\delta^{2}J<0$, thereby proving the theorem.

\end{proof}
Therefore, these two conditions prove that the critical points correspond to local minima if and only if they do not have any conjugate points; otherwise, they are not local minima.

\section{Application to Cantilever structures}
\label{sec:s3}
The main motivation for developing the conjugate point stability test is to evaluate the stability of tip-loaded soft cantilever arms. The deformations in the elastic rods are modelled using the standard Kirchhoff Rod theory~\cite{antman2006nonlinear}, and  we describe this theory in this section. Euler parameters are employed in our model to describe rotations, and most of the notation adopted here follows~\cite{Dichmann1996}.

\subsection{Kinematics and Equilibrium}

The elastic rod configuration is modelled as an orientable curve in 3D space, represented by a centreline $\mathbf{r}:\left[0,l\right] \ni s  \longrightarrow \mathbb{R}^{3} $ and an orientation frame spanned by the orthonormal unit vectors called directors $\mathbf{d}_{i}:\left[0,l\right] \ni s \longrightarrow \mathbb{R}^{3}, i=1,2,3$. The independent variable $s$ is the arclength of the undeformed configuration. The directors $\mathbf{d}_{1}$ and $\mathbf{d}_{2}$ span along the cross-section of the rod and satisfies $\mathbf{d}_{3}=\mathbf{d}_{1} \times \mathbf{d}_{2}$. Here, $\times$ denotes the standard cross-product between two vectors. The rate of change of the director frame $\{ \mathbf{d}_{1},\mathbf{d}_{2},\mathbf{d}_{3} \}$ with respect to the arclength $s$ is characterized using the Darboux vector $\mathbf{u} \in \mathbb{R}^{3}$ as
\begin{equation}
\begin{split}
\mathbf{d}_{i}^{\prime}\left( s \right)&= \mathbf{u}\left( s \right) \times \mathbf{d}_{i}\left( s \right), \qquad  i=1,2,3.
\end{split}
\end{equation}
The components along the local directors $u_{1}\equiv \mathbf{u} \cdot \mathbf{d}_{1}$, $u_{2}\equiv \mathbf{u} \cdot \mathbf{d}_{2}$  correspond to bending strains, while the component $u_{3} \equiv \mathbf{u} \cdot \mathbf{d}_{3}$ corresponds to twisting strain.  We restrict ourselves to the case of inextensible and unshearable rods, where the tangent $\mathbf{d}_{3}$ coincides with the tangent to the centreline $\mathbf{r}(s)$
\begin{equation}
\label{eqn:constrain1}
\mathbf{r}^{\prime}(s)= \mathbf{d}_{3}\left(s \right),
\end{equation}
where $()^{\prime}$ denotes the differentiation with respect to arclength $s$. We denote these local strain components using a triad $\mathtt{u}(s)=(u_{1}(s),u_{2}(s),u_{3}(s))$. We consider the unstressed configuration or lowest energy configuration to be the reference state. Let $\mathtt{\hat{u}}(s)=(\hat{u}_{1}(s),\hat{u}_{2}(s),\hat{u}_{3}(s))$ be  the triplet of strain components in its unstressed configuration. A constant $\mathtt{\hat{u}}(s)=(0,0,0)$ represents a intrinsically straight rod, while a constant intrinsic curvature of the form $(\hat{u}_{1},\hat{u}_{2},0)$, with both of $\hat{u}_{1},\hat{u}_{2}$ not zero, represents a circular arc. A constant non-zero $\hat{u}$ in the later case, transforms it into a helix shape. A constant $\mathtt{\hat{u}}(s)=(0,0,0)$ represents a intrinsically straight rod, while a constant intrinsic curvature of the form $(\hat{u}_{1},\hat{u}_{2},0)$, with both $\hat{u}_{1}$,$\hat{u}_{2}$ not zero, represents a circular arc. A constant non-zero $\hat{u}$ in the latter case, transforms it into a helix shape. The local orientation frame $\{ \mathbf{d}_{1},\mathbf{d}_{2},\mathbf{d}_{3} \}$ is connected to the fixed laboratory  frame  $\{ \mathbf{e}_{1},\mathbf{e}_{2},\mathbf{e}_{3} \}$  through an $SO(3)$ matrix, which is commonly parameterized using three Euler Angles. However, the three-dimensional representation of Euler angles does not completely represent the $SO(3)$-space globally and has singular directions. The Euler parameters (or a Quaternions) $\textbf{q} \in  \mathbb{R}^{4}$~\cite{goldstein2011classical, shuster1993survey} with the property  $ \|\mathbf{q} \|=1$, provides the global representation of $SO(3)$ space and mitigate the singularities. In addition, Euler parameter representation uses quadratic functions which are computationally quicker compared to the trigonometric functions used in Euler angle representation. Any spatial orientation of a rigid body can be uniquely described as a simple rotation $\alpha$ about an arbitrary axis $a_{1} \mathbf{e}_{1} + a_{2}\mathbf{e}_{2} + a_{3} \mathbf{e}_{3}$. Then, the corresponding Euler parameters are defined as
\begin{align}
q_{i}= a_{i} \sin \frac{\alpha}{2} \quad i=1,2,3, \qquad
q_{4}=\cos \frac{\alpha}{2}.
\end{align}
The directors $\mathbf{d}_{i} \in \mathbb{R}^{3}, i=1,2,3$ with respect to the fixed coordinate system $\{\mathbf{e}_{1},\mathbf{e}_{2},\mathbf{e}_{3} \}$ can be written in terms of Euler parameters $\mathbf{q}$  as
\begin{align}
\begin{split}
&\mathbf{d}_{1} = \frac{1}{\left|\mathbf{q} \right| ^{2}} 
\begin{bmatrix}
q^{2}_{1}- q^{2}_{2}- q^{2}_{3} +q^{2}_{4} \\
2 \left( q_{1} q_{2} + q_{3} q_{4} \right) \\
2 \left( q_{1} q_{3} - q_{2} q_{4} \right)
\end{bmatrix},  \qquad
\mathbf{d}_{2}= \frac{1}{\left|\mathbf{q} \right| ^{2}} 
\begin{bmatrix}
2 \left( q_{1} q_{2} - q_{3} q_{4} \right) \\
-q^{2}_{1}+ q^{2}_{2}- q^{2}_{3} +q^{2}_{4} \\
2 \left( q_{2} q_{3} + q_{1} q_{4} \right),
\end{bmatrix},\\ 
& \hspace{2cm} \mathbf{d}_{3} = \frac{1}{\left|\mathbf{q} \right| ^{2}} 
\begin{bmatrix}
2 \left( q_{1} q_{3} + q_{2} q_{4} \right) \\
2 \left( q_{2} q_{3} - q_{1} q_{4} \right)\\
-q^{2}_{1}- q^{2}_{2}+ q^{2}_{3} +q^{2}_{4} 
\end{bmatrix}.
\end{split}
\end{align}
Similarly, the strain components $ \mathbf{u}(s) \cdot \mathbf{d}_{j}(s) \equiv $ $u_{j}(s)$ in terms of Euler parameters and their derivatives~\cite{Dichmann1996} are given by
\begin{equation}
{u}_{j}(s)=\frac{2}{\mathbf{\left|q \right|}^2}\mathbf{B_{j}} \mathbf{q}\cdot  \mathbf{q}^{\prime}, \qquad j =1,2,3,
\label{eqn:strain_quaternions}
\end{equation}
where $\mathbf{B}_{j}$, $j=1,2,3$ are $ 4 \times 4 $ skew symmetric matrices given by
\begin{equation}
\label{eqn:Bmatrices}
\mathbf{B}_{1}=
\begin{bmatrix}
0 & 0& 0& 1 \\
0 & 0& 1& 0 \\
0 & -1& 0& 0 \\
-1 & 0& 0& 0 
\end{bmatrix}, \quad
\mathbf{B}_{2}=
\begin{bmatrix}
0 & 0& -1& 0 \\
0 & 0& 0& 1\\
1 & 0& 0& 0 \\
0 & -1& 0& 0 
\end{bmatrix},\quad
\mathbf{B}_{3}=
\begin{bmatrix}
0 & 1& 0& 0 \\
-1 & 0& 0& 0 \\
0 & 0& 0& 1 \\
0 & 0& -1& 0 
\end{bmatrix}.
\end{equation}
These matrices map $\mathbf{q} \in \mathbb{R}^{4} $ to vectors that are orthogonal to each other as well as orthogonal to $\mathbf{q}(s)$.

The stresses acting along the cross-section of the rod can be averaged out to yield internal force $\mathbf{n} \in \mathbb{R}^{3}$ and moment $\mathbf{m} \in \mathbb{R}^{3} $. The components $m_{i}(s) \equiv\mathbf{m}(s) \cdot \mathbf{d}_{i}(s), i=1,2$ are the bending moments and the component $m_{3}(s)\equiv\mathbf{m}(s) \cdot \mathbf{d}_{3}(s)$ is the twisting moment in the rod. We use $\mathtt{m}$ to denote the triad of these components $\mathtt{m}=\left(m_{1},m_{2},m_{3}\right)$. We consider the rods that satisfy Hyperelastic constitutive law. A convex strain energy density function $ W:\{\mathbf{w};s \}\rightarrow \mathbb{R}^{+}$, $\mathbf{w}=(w_{1},w_{2},w_{3})$ exists such that $\frac{\partial W(0,s)}{\partial w_{i}}=0,i=1,2,3,$ $\forall s$, and the moment components are given by
\begin{equation}
m_{i}(s) = \frac{\partial }{\partial w_{i}}W(w_{i},s),     \quad  i=1,2,3,
\end{equation}
where the shifted strain argument $w_{i} \equiv u_{i}-\hat{u}_{i}$ describe the strain from intrinsic shape $\hat{u}_{i}$. In the present work, we restrict ourselves to a simple linearly elastic constitutive model where the strain energy density function is given by
\begin{equation}
\label{eqn:strinenergy}
W \left(u_{i} -\hat{u}_{i};s  \right) =\sum_{i=1}^{3} \frac{1}{2}
K_{i}(s) \left(u_{i} (s)-\hat{u}_{i} (s)\right)^{2}, 
\end{equation}
the moment components are given by
\begin{align}
    m_{i}= \frac{\partial W}{\partial \mathtt{u}_{i}}=K_{i}(s) \left( u_{i} (s)-\hat{u}_{i} (s)\right), \hspace{1cm} i=1,2,3.
\end{align}
Here, $K_{i} :s\rightarrow \mathbb{R} $ for $i=1,2$ is called bending stiffness  or flexural  rigidity, and $K_{3} :s\rightarrow \mathbb{R} $ is called torsional stiffness of the rod. Generally, a straight elastic rod with a circular cross-section follows a transversely isotropic law~\cite{antman2006nonlinear}, leading to $K_{1}=K_{2}$. We consider a similar transversely isotropic law in our intrinsically curved elastic rods. In structural mechanics, the bending stiffnesses $K_{i}$ $(i=1,2)$ at any cross-section $s$ along the rod is given by $EI$, while the torsional stiffness is given by $ EI/(1+\nu)$. Here $E$ is Young's modulus of the material, which characterizes its strength, $I$ is the second moment of area of the cross-section of the rod, which depends on its geometry, and $\nu$ is the Poisson's ratio of the material. Its value ranges from $0$ to $0.5$, with $\nu = 0.5$ corresponding to an incompressible material. For a circular cross-section rod, $I = \frac{\pi}{4} r^{4}$, where $r$ is its cross-sectional radius.
\begin{figure}[t]
    \centering
    \includegraphics[width= 0.7\textwidth]{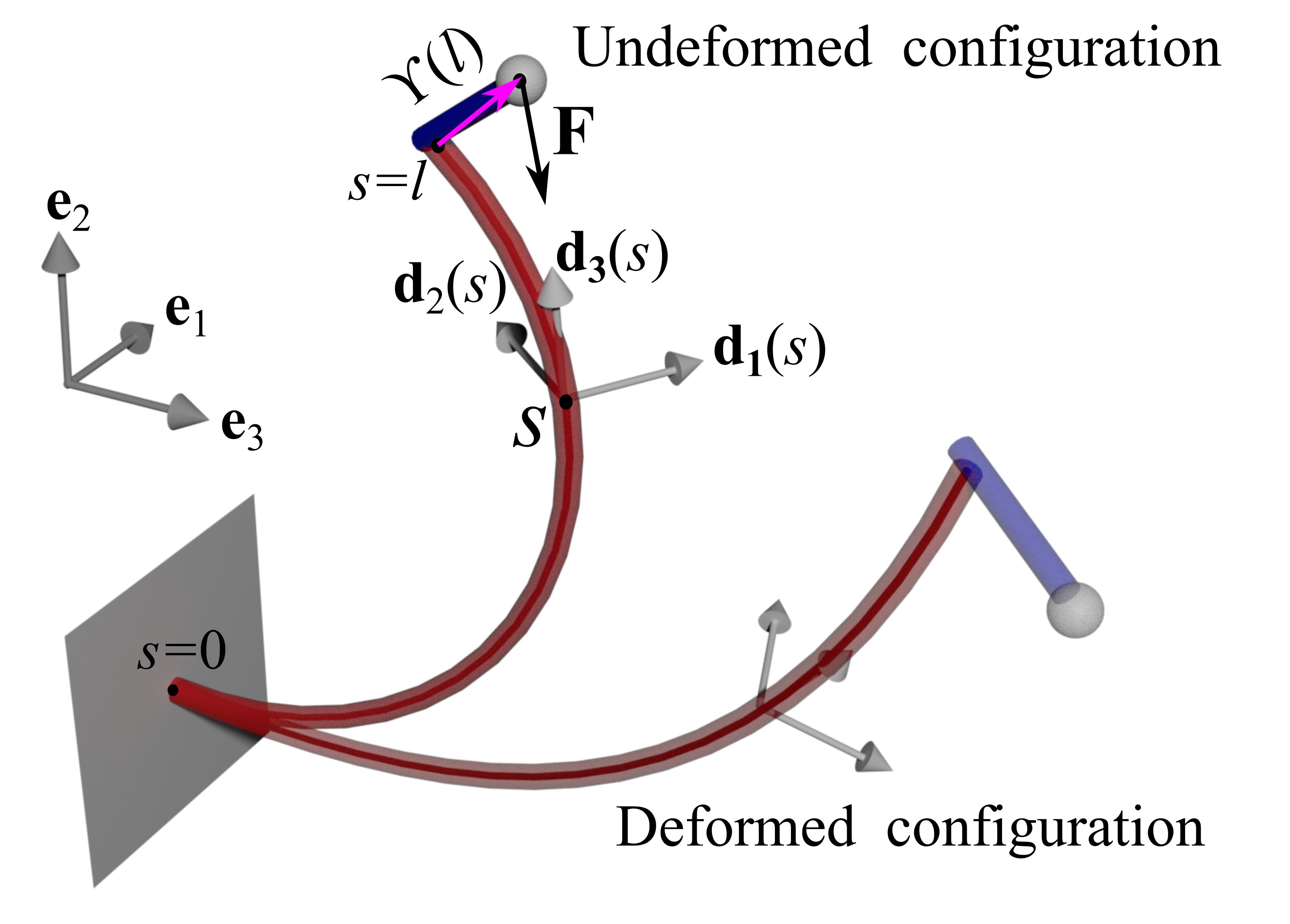}
    \caption{Schematic showing  an elastic rod with an external tip load acting through a massless rigid arm (in blue). The arm is assumed to be fixed to the tip of the elastic rod.}
    \label{fig:figure2}
\end{figure}

We consider a problem where a massless elastic rod is clamped at one end, with a dead payload applied at the other end as shown in Figure~\ref{fig:figure2}. The payload is rigidly attached to the tip $s=l$, so its spatial orientation depends on the tip's orientation frame  $\{\mathbf{d}_{1}(l),\mathbf{d}_{2}(l),\mathbf{d}_{3}(l) \}$, which is a function of $\mathbf{q}(l)$. The lever arm connecting the point of attachment to the point of application of force in the fixed frame is given by
\begin{align}
    \Upsilon(\mathbf{q}(l))\equiv \Delta_{1} \mathbf{d}_{1}(l)+ \Delta_{2}\mathbf{d}_{2}(l)+ \Delta_{3}\mathbf{d}_{3}(l).
\end{align}
where $\Delta_{1},\Delta_{2},\Delta_{3}$ are constant scalars representing the components of the arm in the tip's frame, and $\boldsymbol{\Delta}=(\Delta_{1},\Delta_{2},\Delta_{3})$ represents this triad. The payload exerts a constant force $\mathbf{F} \in \mathbb{R}^{3}$ and a moment $\boldsymbol \Upsilon(\mathbf{q}(l))\times  \mathbf{F} \in \mathbb{R}^{3}$ at the tip $s=l$. Then, the stored potential energy due to this applied tip load $\mathbf{F}$ is given by
\begin{align*}
E_{l}=-\mathbf{F} \cdot \left( \mathbf{r}(l)  +  \Upsilon(\mathbf{q}(l)) \right).
\end{align*}
The rod equilibria are the constrained critical points of the strain energy functional of the total energy functional, which is the sum of elastic strain energy and stored potential energy
\begin{align}
\int_{0}^{l}W \left(\frac{2}{\mathbf{\left|q \right|}^2}\mathbf{B}_{j} \mathbf{q}\cdot  \mathbf{q}^{\prime} - \hat{u}_{j},s  \right)  ds - \mathbf{F} \cdot \left( \mathbf{r}(l)  +  \Upsilon (\mathbf{q}(l)) \right),
\label{eqn:Energy_functional}
\end{align}
subject to pointwise constraints
\begin{align}
 \mathbf{r}^{\prime} - \mathbf{d}_{3}=\mathbf{0},\quad \mathbf{q} \cdot \mathbf{q}^{\prime}=0.
\end{align}
By definition, the Euler parameters have a unit length $\mathbf{q} \cdot \mathbf{q}=1$. Instead of enforcing it, we choose to impose this constraint on its derivative $ \mathbf{q} \cdot \mathbf{q}^{\prime}=0$. This equation, together with the boundary conditions $\mathbf{q}(0) \cdot \mathbf{q}(0)=1$ is equivalent to $\mathbf{q}(s) \cdot \mathbf{q}(s)=1$. This constrained variational problem is formulated as an unconstrained variational problem using the Lagrange multipliers $\boldsymbol \lambda: s\rightarrow   \mathbb{R}^{3}$, $\eta: s\rightarrow  \mathbb{R}$. The functional represented by
\begin{equation}
\begin{split}
\label{eqn:augmented_Lagrangian}
 J  &= \int_{0}^{l} L(\mathbf{r},\mathbf{r}^{\prime},\mathbf{q},\mathbf{q}^{\prime}; s) ds \\
    &= \int_{0}^{l} W \left(\frac{2}{\mathbf{\left|q \right|}^2}\mathbf{B}_{j} \mathbf{q}\cdot  \mathbf{q}^{\prime} - \hat{u}_{j},s  \right)  +  \boldsymbol \lambda \cdot \left( \mathbf{r}^{\prime} - \mathbf{d}_{3} \right) + \eta \mathbf{q} \cdot \mathbf{q}^{\prime} ds  
    - \mathbf{F} \cdot \left( \mathbf{r}(l)  +  \Upsilon(\mathbf{q}(l)) \right).
\end{split}
\end{equation}
is stationary at the equilibrium. We refer to the integrand $L$ as the Augmented Lagrangian. The Hamiltonian form of the equilibria~\cite{Dichmann1996} is adapted here, as they offer more simplicity in terms of analysis and numerics. The phase variables in this Hamiltonian formulation are the states $\mathbf{r}, \mathbf{q}$ and their corresponding conjugate momenta $\mathbf{n}$, $\boldsymbol \mu$. The internal force $\mathbf{n}$ and the impetus $\boldsymbol \mu$ are defined using the Augmented Lagrangian $L$ as

\begin{align}
\label{eqn:momentum1}
\mathbf{n}\equiv \frac{\partial L}{\partial \mathbf{r}^{\prime}}= \boldsymbol \lambda,  \qquad
 \boldsymbol \mu \equiv \frac{\partial L}{\partial \mathbf{q}^{\prime}}= 2W_{u_{i}}   \frac{ \mathbf{B}_{i}\mathbf{q}}{ \mathbf{\left|q \right|}^2} + \eta \mathbf{q}, \quad i=1,2,3.
\end{align}
The dot product of $\boldsymbol \mu$ with $\mathbf{B}_{i}\mathbf{q}/2$ fetches the components of the internal moment $m_{i}$ in terms of Hamiltonian variables
\begin{align}
   \frac{\boldsymbol \mu \cdot \mathbf{B}_{i}\mathbf{q}}{2} & =  W_{u_{i}} \equiv m_{i} , \quad \text{for} \quad i=1,2,3.
   \label{eqn:moment_components}
\end{align}
On the other hand, the dot product of $\boldsymbol \mu$ with $\mathbf{q}$ gives the Lagrange multiplier $\eta$. The Lagrange multiplier associated with the unit constraint $\mathbf{q} \cdot \mathbf{q}=1 $ can be shown to be twice the bending energy~\cite{Oreilly1999}. Moreover, $\eta$, the Lagrange multiplier associated with $\mathbf{q} \cdot \mathbf{q}^{\prime}=0$, is negative of its anti-derivative. The Hamiltonian of the system after taking the Legendre transformation of $L$ appears as
\begin{align}
\label{eqn:Hamiltonian}
     H \left(\mathbf{r},\mathbf{q},\mathbf{n},\boldsymbol \mu;s\right) = \mathbf{n}\cdot \mathbf{d}_{3}+ \sum_{i=1}^{3} m_{i} \hat{u}_{i}  + \frac{m_{i}^2}{2K_{i}}.
\end{align}
Then, the canonical form of the Hamiltonian system of equations governing the equilibria is given by
\begin{subequations}
\label{eqn:Ham_rod_equilibria}
\begin{align}
    \label{eqn:Ham_rod_equilibria_a}
    \mathbf{r}^{\prime}(s) &=\frac{\partial H}{\partial \mathbf{n}}= \mathbf{d}_{3}, \\
    \label{eqn:Ham_rod_equilibria_b}
    \mathbf{n}^{\prime}(s) &=-\frac{\partial H}{\partial \mathbf{r}}= \mathbf{0},\\
    \label{eqn:Ham_rod_equilibria_c}
  	\qquad \mathbf{q}^{\prime}(s)& =\frac{\partial H}{\partial \boldsymbol \mu}    =  \sum_{j=1}^{3}\left( K_{j}^{-1}m_{j} + \hat{u}_{j} \right)\frac{1}{2}\mathbf{B}_{j}\mathbf{q}, \\
   \label{eqn:Ham_rod_equilibria_d}
		\qquad \boldsymbol  \mu ^{\prime}(s) & =- \frac{\partial H}{\partial \mathbf{q}}=  \sum_{j=1}^{3} \left( K_{j}^{-1} m_{j} + \hat{u}_{j} \right)\frac{1}{2} \mathbf{B}_{j}  \boldsymbol \mu   -\frac{\partial \mathbf{d}_{3}}{\partial \mathbf{q}} ^{T}\mathbf{n},
\end{align}
\end{subequations}
where the derivative $\frac{\partial \mathbf{d}_{3}}{\partial \mathbf{q}}$ is 
\begin{align*}
    \frac{\partial \mathbf{d}_{3}}{\partial \mathbf{q}}= 
      2\begin{bmatrix}
    q_{3} & q_{4} & q_{1} & q_{2},\\
    -q_{4} & q_{3} & q_{2} & -q_{1},\\
    -q_{1} & -q_{2} & q_{3} & q_{4}
    \end{bmatrix},
\end{align*}
and the component $m_{i}$ are written in terms of phase variables $\boldsymbol \mu$ and $\mathbf{q}$ using the relation~\eqref{eqn:moment_components}. 

The equations~\eqref{eqn:Ham_rod_equilibria_c},~\eqref{eqn:Ham_rod_equilibria_d} are equivalent to the balance of moments projected on the hyperspace spanned by vectors $ \mathbf{B}_{1} \mathbf{q}$, $ \mathbf{B}_{2} \mathbf{q}$ and $ \mathbf{B}_{3} \mathbf{q}$. (For more details see~\cite[page 33-35]{spcdhanakoti}). These ODEs are subjected to fixed boundary conditions at $s=0$
\begin{align}
\label{eqn:Rod_bcs_Dir}
\mathbf{r}(0)&= [0,0,0]^{T}, \qquad \mathbf{q}(0)=\mathbf{q}_{o},
\end{align}
and natural boundary conditions at the other end $s=l$
\begin{subequations}
\label{eqn:Rod_bcs_Nat}
\begin{align}
\label{eqn:Rod_bcs_Nat_a}
\mathbf{n}(l)&- \mathbf{F}= \mathbf{0}, \\
\label{eqn:Rod_bcs_Nat_b}
 m_{i}(l)&- \left( \boldsymbol \Upsilon (\mathbf{q}(l)) \times \mathbf{F} \right) \cdot \mathbf{d}_{i}(\mathbf{q}(l)) =0, \hspace{1 cm} i=1,2,3,\\
 \label{eqn:Rod_bcs_Nat_c}
 \boldsymbol \mu(l) & \cdot \mathbf{q}(l)+ 2 \mathbf{r}(l) \cdot \mathbf{n}(l) =0.
\end{align}
\end{subequations}
The quantity $\mathbf{q}_{o} \in \mathbb{R}^{4}$ defines the prescribed orientation of the director frame at the fixed end. The condition~\eqref{eqn:Rod_bcs_Nat_b} results from projection of natural boundary conditions of $\boldsymbol \mu(l)$ onto $\{ \mathbf{B}_{1} \mathbf{q}, \mathbf{B}_{2} \mathbf{q}, \mathbf{B}_{3} \mathbf{q} \}$-space and the remaining~\eqref{eqn:Rod_bcs_Nat_c} results from their projection onto $\mathbf{q}$. The last expression~\eqref{eqn:Rod_bcs_Nat_c} can be set to any value as it specifies the Lagrange multiplier $ \eta$ with a boundary condition and restricts its gauge freedom~\cite{LiMaddocks1996}. 

\subsection{Stability Analysis}
\label{sec:Elastic_rods_Jacobi_condition}
The equilibria $\boldsymbol \zeta_{o}$ obtained as the solutions to~\eqref{eqn:Ham_rod_equilibria} with the boundary conditions~\eqref{eqn:Rod_bcs_Dir},~\eqref{eqn:Rod_bcs_Nat} must satisfy the Legendre's strengthened condition~\eqref{eqn:Legendre_condition} along with the Jacobi condition to represent a local minima of the functional~\eqref{eqn:Energy_functional}. The variable $\mathbf{r}$ has no explicit contribution in the elastic strain energy $W(\mathtt{u}-\mathtt{\hat{u}},s)$; it appears only through boundary conditions and the constraint $\mathbf{r}^{\prime}=\mathbf{d}_{3}$. We decouple $\mathbf{r}(s)$ and its conjugate momentum $\mathbf{n}(s)$ from the variational problem for stability analysis by directly substituting the relations in the functional to yield
\begin{align}
\int_{0}^{l}L ds = \int_{0}^{l}\mathbf{W}\left(\frac{2}{\mathbf{\left|q \right|}^2}\mathbf{B_{j}} \mathbf{q}\cdot  \mathbf{q}^{\prime} - \hat{u}_{j},s  \right) - \mathbf{F} \cdot  \mathbf{d}_{3} ds -   \mathbf{F} \cdot \Upsilon (\mathbf{q}(l)).
\label{eqn:Energy_functional_}
\end{align}
The variations in $\mathbf{q}$ and $\boldsymbol \mu$ are represented using $\delta \mathbf{q} $ and $\delta \boldsymbol \mu $ respectively. 
In this analysis, the variations $\delta \mathbf{q}$ must be restricted to satisfy the unit-norm constraint $\mathbf{q} \cdot \mathbf{q} = 1 $. Consequently, $\delta \mathbf{q}$ satisfies $\mathbf{q} \cdot  \delta \mathbf{q}=0 $ or equivalently $\delta \mathbf{q}$ is orthogonal to $\mathbf{q}$. There are many choices of the basis for these orthogonal vectors, and we choose the basis $\{\mathbf{B}_{1} \mathbf{q}, \mathbf{B}_{2} \mathbf{q},\mathbf{B}_{3} \mathbf{q}\}$ for our computations. Any arbitrary variation $\delta \mathbf{q} \in \mathbb{R}^{4}$ can be projected on $\{\mathbf{B}_{1} \mathbf{q}, \mathbf{B}_{2} \mathbf{q},\mathbf{B}_{3} \mathbf{q}\}$-space using the projections $\Lambda = \Pi^{T} \delta \mathbf{q} $ in $ \mathbb{R}^{3}$ where 
\begin{align*}
\Pi = \begin{bmatrix}
\mathbf{B}_{1} \mathbf{q} & \mathbf{B}_{2} \mathbf{q} & \mathbf{B}_{3} \mathbf{q} 
\end{bmatrix}
\in \mathbb{R}^{4 \times 3}.
\end{align*}
As a result, the new projection of the second variation reads as
\begin{align*}
\delta^2 J[\Lambda]= \frac{1}{2} \int_{0}^{l} \bigg[ \boldsymbol \Lambda^{\prime} \cdot \bar{\mathbf{P}} \boldsymbol \Lambda^{\prime} + \boldsymbol \Lambda \cdot \bar{\mathbf{C} } \boldsymbol \Lambda^{\prime} +
 \boldsymbol \Lambda^{\prime} \cdot \bar{\mathbf{C}}^{T}  \boldsymbol \Lambda  +  \boldsymbol \Lambda  \cdot \bar{\mathbf{Q}}^{T}  \boldsymbol \Lambda  \bigg] ds, 
\end{align*}
where 
\begin{align*}
    \bar{\mathbf{P}} = \Pi^{T} L_{\mathbf{q}^{\prime} \mathbf{q}^{\prime} }\Pi,\qquad
    \bar{\mathbf{Q}} = \left(\Pi^{\prime} \right)^{T}  L_{\mathbf{q}^{\prime} \mathbf{q}^{\prime} }\Pi^{\prime},\qquad
    \bar{\mathbf{C}} = \left(\Pi^{\prime} \right)^{T} L_{\mathbf{q}^{\prime} \mathbf{q}^{\prime} }\Pi + 
    \Pi^{T} L_{\mathbf{q} \mathbf{q}^{\prime} }\Pi.
\end{align*}
Along the projected directions, the Hessian matrix 
\begin{align}
     \bar{\mathbf{P}}= 
    \begin{bmatrix}
    K_{1} & 0 & 0\\
    0 & K_{2}  & 0\\
   0 & 0 & K_{3}\\
    \end{bmatrix},
\end{align}
is positive definite and satisfies Legendre's condition, whereas $L_{\mathbf{q}^{\prime} \mathbf{q}^{\prime} }$ alone is only positive semi-definite. The linearization of the Hamiltonian form of the equilibria~\eqref{eqn:Ham_rod_equilibria} gives the Hamiltonian form of the Jacobi operator $\mathcal{S}$
\begin{align}
\label{eqn:Jacobi_equations}
    \begin{bmatrix}
    \delta \mathbf{q}\\
    \delta \boldsymbol \mu
    \end{bmatrix}^{\prime}=
        \begin{bmatrix}
    \mathbf{O} &  \mathbf{I} \\
     -\mathbf{I}&  \mathbf{O} 
    \end{bmatrix} 
        \begin{bmatrix}
    H_{q q} &  H_{q \mu} \\
     H_{\mu q}&  H_{\mu \mu}
    \end{bmatrix} 
    \begin{bmatrix}
    \delta \mathbf{q}\\
    \delta \boldsymbol \mu
    \end{bmatrix},
\end{align}
where the Hessian matrices $H_{q q}$, $H_{q \mu}$,
     $H_{\mu q}$, $ H_{\mu \mu}$ are partial derivatives of $H$ with respect to respective arguments. By restricting the variations $\delta \mathbf{q}$ only to $\{\mathbf{B}_{1}\mathbf{q}, \mathbf{B}_{2}\mathbf{q},\mathbf{B}_{3}\mathbf{q}\}$ basis, we obtain the Hamiltonian version of $\mathcal{S}$ in projected space on $\Pi$. The absence of conjugate points in the interval $[0,l]$ is the sufficient condition for the equilibria $\zeta_{o}$ to be stable, the computation  of which is outlined below. The boundary with the natural boundary conditions, i.e., $(s=l)$ is chosen and IVP is solved towards the other end $(s=0)$ for a basis of initial values for $\delta \mathbf{q}$
\begin{align}
\label{eqn:Jacobi_equations_BC_l}
    \delta \mathbf{q}(l)=\mathbf{B}_{i}\mathbf{q}(l),  \quad i=1,2,3,
\end{align}
and with initial values of $\delta \boldsymbol \mu(l)$ that satisfy the linearized boundary condition~\eqref{eqn:Rod_bcs_Nat_b},~\eqref{eqn:Rod_bcs_Nat_c} 
\begin{subequations}
\begin{align}
    \delta \boldsymbol \mu(l) &\cdot \mathbf{B}_{i}\mathbf{q}(l )  + \boldsymbol \mu(l)  \cdot \mathbf{B}_{i} \delta \mathbf{q}(l) + \frac{\partial}{\partial \mathbf{q}}\left(\Upsilon(l) \times \mathbf{F} \right)_{i} \delta \mathbf{q}(l )=0, \quad i=1,2,3,\\
    \boldsymbol \mu(l) &\cdot  \delta \mathbf{q}(l) +   \mathbf{q}(l)  \cdot \delta \boldsymbol \mu(l)=0.
\end{align}
 \label{eqn:linearized_bcs}
\end{subequations}
The algebraic system \eqref{eqn:linearized_bcs} is solved to obtain the respective values of the components $\delta \boldsymbol \mu (l)$.\\

The components corresponding to the IVP solution $\delta \mathbf{q}(s)$ for the $i$ th set of IVP are denoted $\delta \mathbf{q}^{(i)}$, $i=1,2,3$. These four components are arranged as rows in the $4\times 3$ matrix along $s$ as
\begin{align}
\begin{bmatrix}
\delta \mathbf{q}^{(1)}(s) & \delta \mathbf{q}^{(2)}(s) & \delta \mathbf{q}^{(3)}(s) 
\end{bmatrix}.
\end{align}
This matrix is projected onto $\{\mathbf{B}_{1}\mathbf{q}, \mathbf{B}_{2}\mathbf{q},\mathbf{B}_{3}\mathbf{q}\}$- space to yield  a $ 3 \times 3 $ matrix. We call this matrix \emph{stability  matrix}. A point $l^{*}$ is called conjugate point of $l$ if the determinant of this $3 \times 3$ \emph{stability  matrix} vanishes for any $ l^{*} \in [0,l]$. Therefore, by Jacobi condition, if the equilibrium possesses a conjugate point computed through the above method, it is unstable. A system of 8-dimensional Jacobi equations~\eqref{eqn:Jacobi_equations} must be solved with three sets of boundary conditions~\eqref{eqn:Jacobi_equations_BC_l},~\eqref{eqn:linearized_bcs}, resulting in a 24-dimensional IVP to assess the stability of the equilibria determined by the $14$-dimensional boundary value problem (BVP)~\eqref{eqn:Ham_rod_equilibria}, \eqref{eqn:Rod_bcs_Dir} and \eqref{eqn:Rod_bcs_Nat}.

\begin{figure}[b!]
    \centering
    \includegraphics[width=0.85\textwidth]{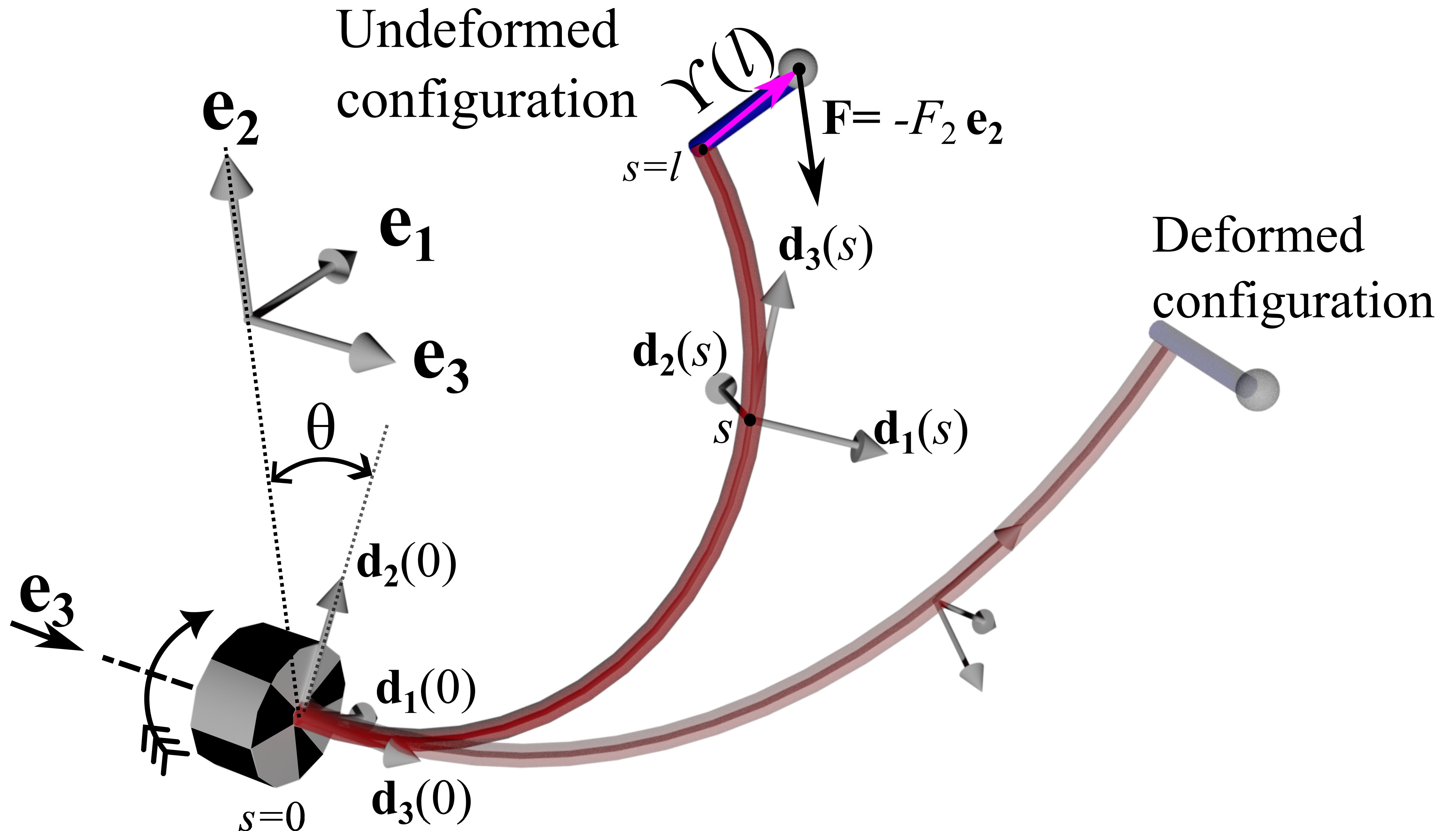}
    \caption{The schematic of the tip-loaded cantilever setup. One end of the rod is fixed to the quasi-statically rotating shaft, while the other end is attached to a load through a massless rigid lever arm (indicated in blue).}
    \label{fig:figure3}
\end{figure}

\section{Numerical Examples}
\label{sec:s4}
In this section, we apply the proposed conjugate point test to determine the stability of tip-loaded cantilever systems and investigate the potential occurrence of snap-back instability. Consider a naturally curved, slender, massless elastic rod clamped to a horizontal $\mathbf{e}_{3}-$axis at one end and attached to a dead load at the other end, with gravity acting in the vertical direction, as depicted in Figure~\ref{fig:figure3}. In all our examples, we assume uniform stiffness along the rod, with its components given by:
\begin{align}
K_{1}=K_{2}=EI=1.0, 
\qquad
K_{3}=\frac{EI}{1+\nu} , \quad \nu=0.5,
\end{align}
and a vertical tip load $\mathbf{F}=-F_{2} \mathbf{e}_{2}$. The clamped end is quasi-statically rotated about the horizontal tangent  $\mathbf{e}_{3}$-axis by setting the boundary condition
\begin{align}
\mathbf{q}(0)= \bigg[ 0 , 0, \sin \frac{\theta}{2}, \cos \frac{\theta}{2}\bigg]^{T},
\end{align}
and is numerically simulated by performing parameter continuation of~\eqref{eqn:Ham_rod_equilibria} in $\theta$ using AUTO-07p~\cite{doedel2007auto}. In this case, the system is $2 \pi-$ periodic with respect to $\theta$, i.e., for any $z \in \mathbb{R}$ the systems at $\theta=z$ and $\theta=  z + 2 \pi $ show identical features.

\subsection{Intrinsically Planar Curvatures}

\begin{figure}[b!]
\centering
    \centering
    \includegraphics[width=1.0\textwidth]{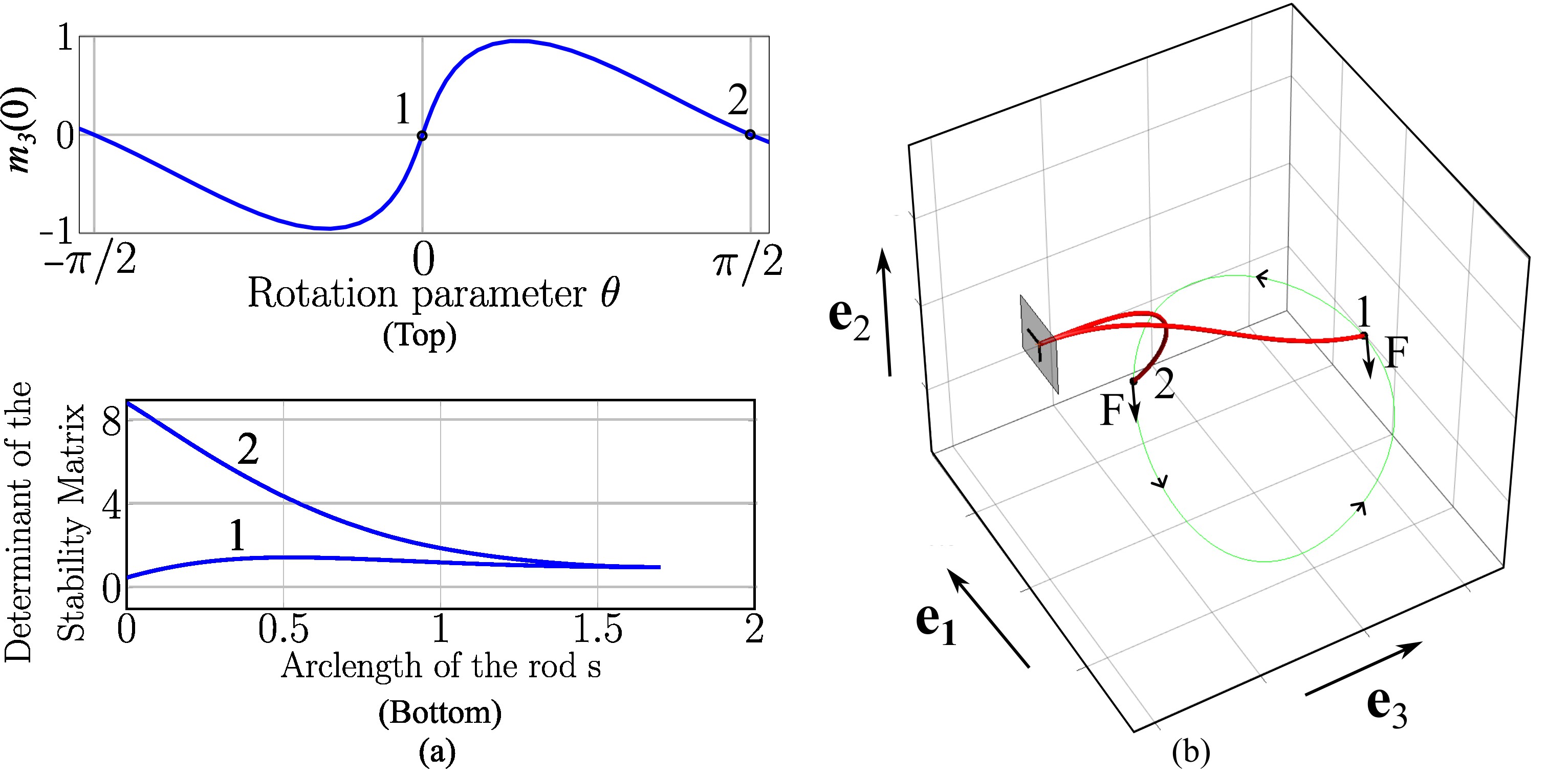}
 \caption{(a)(Top) The bifurcation diagram depicting the twist moment at the clamped end $m_{3}(0)$ as $\theta$ is varied. The equilibria corresponding to $\theta=0$ and $\theta=\pi/2 $ are chosen for stability analysis. (Bottom) The conjugate point computations for equilibria with labels $1$ and $2$. The determinant never vanishes, indicating the absence of conjugate points, and therefore, both equilibria are stable. (b) The tip trace of the cantilever as $\theta$ is varied and the centrelines for the equilibria with labels $1$ and $2$. The tip load is represented by a solid dot, with its direction indicated by arrows.}
 \label{fig:figure4}
\end{figure}

Initially, we analyze naturally curved planar rods that have intrinsic curvature of the form $\mathtt{\hat{u}}=(\hat{u}_{1}, 0, 0)$, subjected to a concentrated tip load $\Delta=(0,0,0)$. We quasi-statically rotate the clamped end by varying $\theta$ from $-\pi$ to $\pi$ and observe the resulting equilibria. This observation is repeated for increasing values of $\hat{u}_{1}$, length $l$, and tip load $F_{2}$. It is well known that if the rod is isotropic $(\hat{u}_{1}=0)$, the centerlines of the deformed configuration take the same shape for any rotation of the clamped end $\theta$, under a given non-zero tip load $F_{2}$. However, when symmetry is broken by introducing a non-zero $\hat{u}_{1}$, the rod centerlines assume different forms depending on $\theta$, even under the same applied load $F_{2}$. A few characteristics of the equilibria in this scenario are shown in Figure~\ref{fig:figure4} and~\ref{fig:figure5}, with the other parameters set at $l=1.7$ and $F_{2}=1.5$. For lower curvatures, such as $\hat{u}_{1}=1.0$, a single rod equilibrium exists for each $\theta$. The twist moment at the clamped end, $m_{3}(0)$, is evaluated from the continued solutions and plotted against $\theta$ to generate the bifurcation diagram. In our examples, $\theta=0$ corresponds to rod shapes curving vertically upward. The stability of these equilibria is assessed through conjugate point computations, as detailed in Section~\ref{sec:Elastic_rods_Jacobi_condition}. All the given equilibria are stable, as they exhibit no conjugate points. Some computations for the labelled equilibria are displayed in Figure~\ref{fig:figure4}a(Bottom). On the other hand, higher curvatures, such as $\hat{u}_{1}=1.5$, result in multiple equilibria for the same value of $\theta$ around $\theta=0$, as reflected by the folds in the parameter (in this case, $\theta$), as shown in Figure~\ref{fig:figure5}a(Top). In this region, three equilibria exist for a given $\theta$. Conjugate point tests reveal that two of these equilibria are stable, while the other (that lies between the folds) is unstable. A few rod configurations along with the tip trace are displayed in Figure~\ref{fig:figure5}b. The tip trace associated with the unstable equilibria is denoted by black dotted lines, indicating a discontinuous path of stable solutions. Consequently, the configurations abruptly transition from one stable configuration to another when operated around this parameter space (around $\theta=0$), mimicking a catapult-like behavior. The dynamics of a rotating cantilever, as a consequence of this instability, is beyond the scope of this paper. For further details on the snapping dynamics of elastic rods, refer to Armanini et al.~\cite{Armanini2017}. The standard bifurcation theory~\cite{Golubitsky1985,iooss2012elementary} predicts that the folds in the bifurcation parameter are the points of stability exchange, and our conjugate point computations concur with this. The evolution of the rod configurations and the snap-back instability depend on their history in cases with folds, as denoted by the arrows in Figure~\ref{fig:figure5}a (Top). Hence, we also use the term \emph{hysteresis} to describe this phenomenon.

\begin{figure}[t!]
\centering
    \centering
    \includegraphics[width=\textwidth]{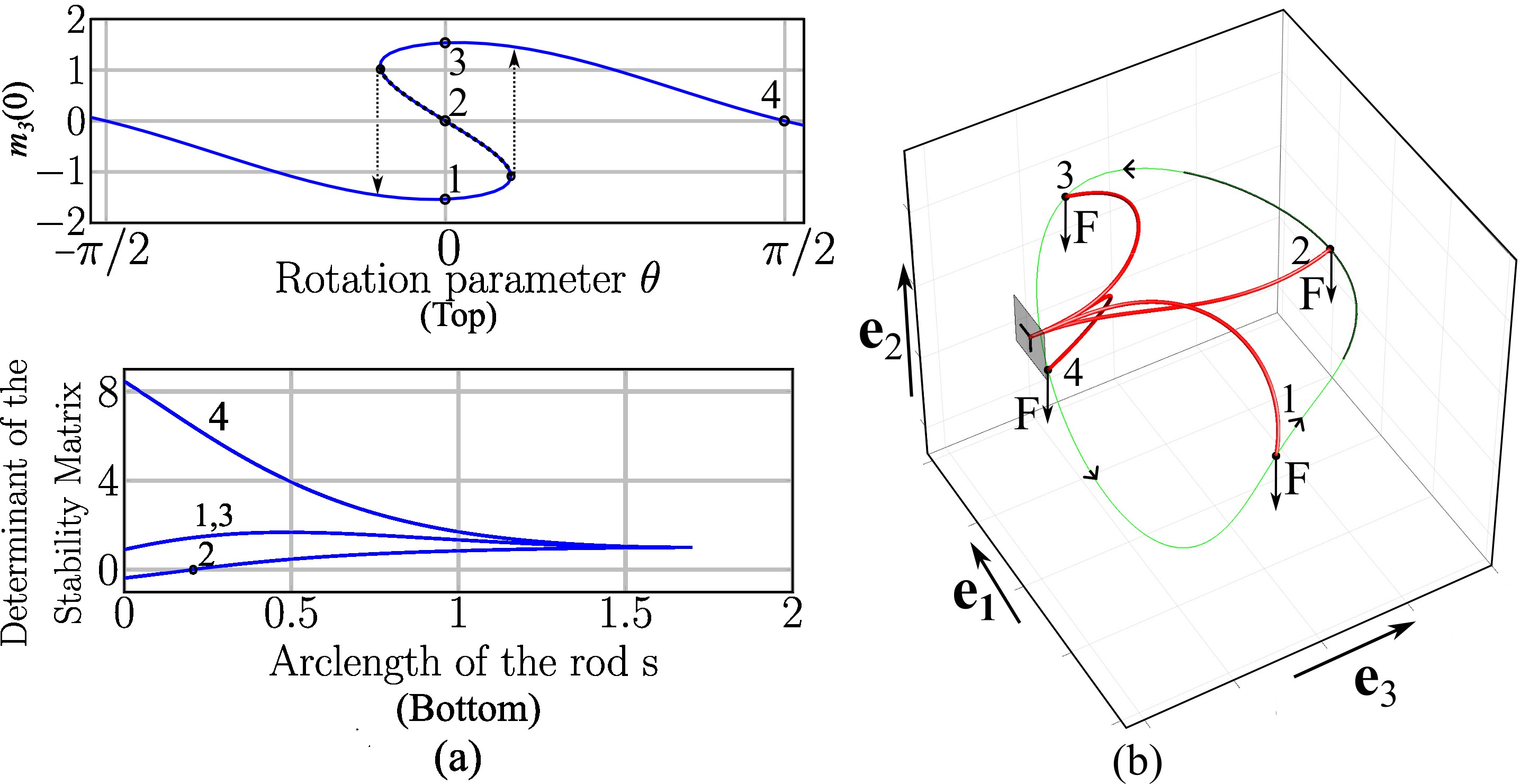}
     \label{fig:figure5b}
 \caption{(a) (Top) The bifurcation diagram depicting the twist moment at the clamped end $m_{3}(0)$ as $\theta$ is varied.  The equilibria corresponding to $\theta=0$ and $\theta=\pi/2 $ are selected for stability analysis. The plot has two folds and three equilibria exist for $\theta=0$. (Bottom) The conjugate point computations for the selected configurations. The determinant corresponding to the equilibrium $2$ vanishes indicating the presence of a conjugate point and is unstable. The remaining equilibria have no conjugate points and are stable. (b) The tip trace during the control maneuver and the rod centrelines of the selected equilibria. The tip load is represented by a solid dot, with its direction indicated by arrows. The tip corresponding to the equilibria lying between the folds is indicated by the black dotted line. The equilibria labelled $1$ and $3$ are mirror images about the $\mathbf{e}_{2} - \mathbf{e}_{3}$ plane and are nearly identical, which explains why the curves corresponding to conjugate tests in (a) bottom coincide.} 
 \label{fig:figure5}
\end{figure}

\begin{figure}[t!]
    \centering
    \includegraphics[width=1.0\textwidth]{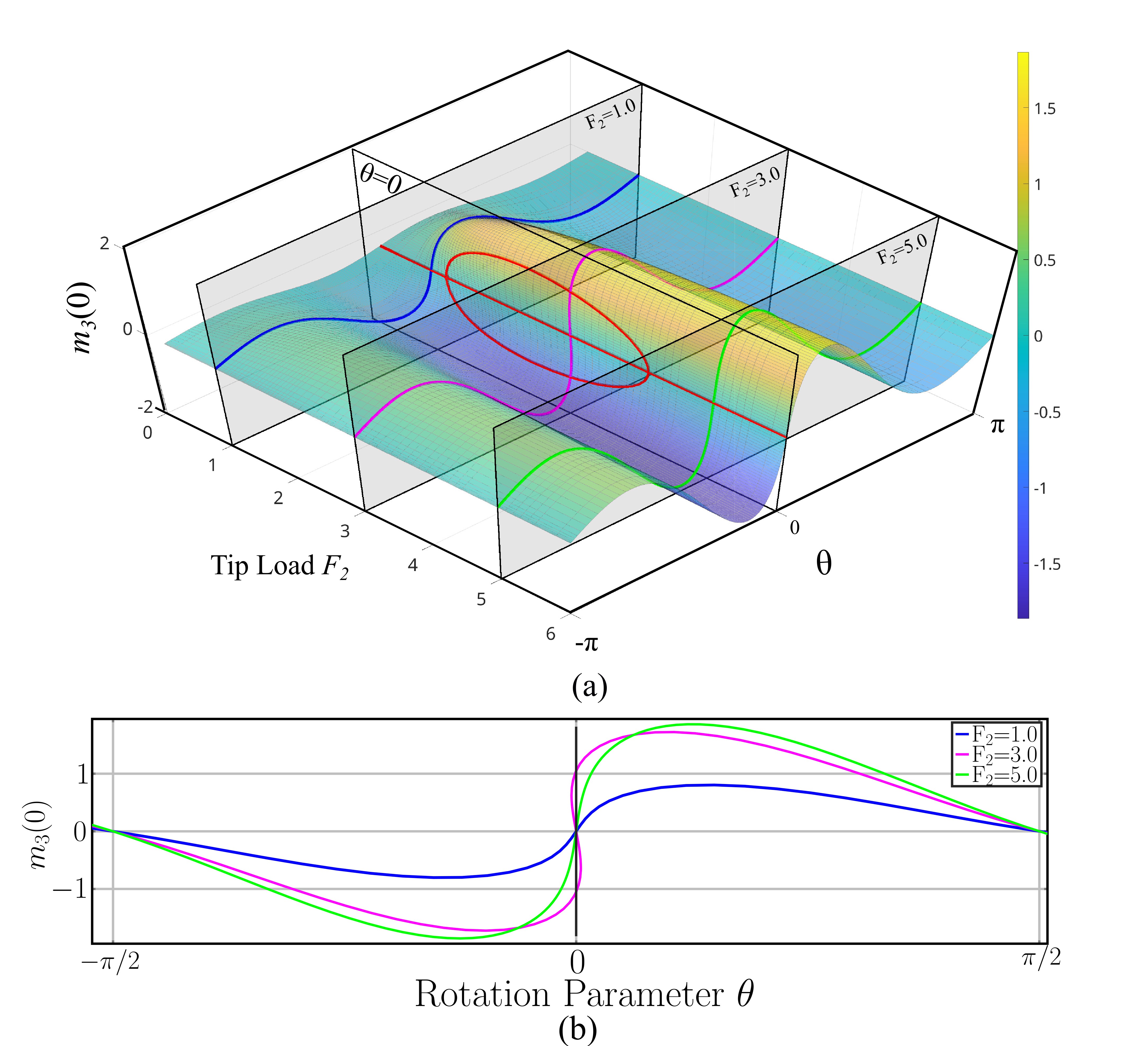}
        \caption{(a) The surface plot of $m_{3}(0)$ for continued solutions for an intrinsic curvature $\mathtt{\hat{u}}=(1.5,0,0)$ and length $l=1.4$ as a function of $\theta$ and tip load $F_{2}$. The planes $F_{2}=1.0,3.0$ and $5.0$ slice the surface giving smooth $m_{3}(0)$ vs. $\theta$ plots. (b) The planar bifurcation curves sliced along the planes.}
    \label{fig:figure6}    
\end{figure}

The nature and extent of the hysteresis region in cantilevers are governed by the complex interplay among system parameters such as intrinsic curvature $\hat{u}_{1}$, length $l$, and tip load $F_{2}$, as illustrated in the following example. Consider a cantilever setup with intrinsic curvature $\mathtt{\hat{u}} = (1.5,0,0)$, length $l = 1.4$, subjected to a concentrated tip load $(\boldsymbol \Delta = (0,0,0))$. We perform parameter continuation along the clamp angle $\theta$ from $-\pi$ to $\pi$ for a fixed $F_{2}$, with values ranging from $0$ to $6.0$. From the resulting solutions, bifurcation diagrams ($m_{3}(0)$ vs. $\theta$ plots) are generated for each $F_{2}$ and plotted as a surface plot, as shown in Figure~\ref{fig:figure6}a. This surface plot is known as the \emph{bifurcation surface}. The planar bifurcation diagrams for $F_{2}=1.0,3.0$ and $5.0$ are depicted by the corresponding $F_{2}-$ planes that bisect this surface, as shown in Figure~\ref{fig:figure6}b. The curves for $F_{2}=1.0$ and $F_{2}=5.0$ do not indicate any regions of unstable equilibria, while the curve for $F_{2}=3.0$ exhibits a region of unstable equilibria characterized by the presence of folds. This horizontal distance between the folds is smaller in this case compared to that in Figure~\ref{fig:figure5}a (Top). Another orthogonal plane $\theta=0$ bisects this surface, producing a curve that can be interpreted as a bifurcation diagram when the parameter $F_{2}$ is varied while $\theta$ is fixed at $0$. The presence of two pitchforks illustrates the symmetrical nature of the rod deformations around $\theta=0$. This diagram indicates the rod's response as the tip load $F_{2}$ increases, while $\theta$ is fixed at $0$, and it corresponds to the rod being planar and curving upward. We draw some preliminary conclusions, relying primarily on the plots and without extensive analysis. As the magnitude of the tip load $F_{2}$ increases, the planar equilibrium of the rod, represented by a straight line, undergoes two pitchfork bifurcations: the first is supercritical and the second is subcritical. According to bifurcation theory~\cite{Golubitsky1985, iooss2012elementary}, the trivial solution (which in this context is the planar rod equilibrium at $\theta=0$) loses stability at the supercritical pitchfork bifurcation as it passes through and regains its stability at the subcritical pitchfork bifurcation. The equilibria at $\theta=0$ that lie between the folds correspond to this unstable trivial line between these bifurcations. 

To gain better insight into the dependence of hysteresis behavior on system properties, we non-dimensionalize the quantities $s, \mathbf{m},\mathbf{n}, \mathbf{u} $ by substituting $s=\bar{s}/\hat{u}_{1}$, $\mathbf{\hat{u}}=\hat{u}_{1} \mathbf{\bar{u}}, K_{i}=EI \bar{K}_{i}$, and $ m_{i}= EI \hat{u}_{1} \tilde{m}_{i} $ for $i=1,2,3$, transforming~\eqref{eqn:Ham_rod_equilibria_c}, ~\eqref{eqn:Ham_rod_equilibria_d} to
\begin{subequations}
\begin{align}
 \frac{d \mathbf{q}}{d \bar{s}}&=  \sum_{j=1}^{3}\left( \bar{K}_{j}^{-1}m_{j} + \bar{u}_{j} \right)\frac{1}{2}\mathbf{B}_{j}\mathbf{q}, \\
		\frac{d \tilde{\boldsymbol \mu}}{d \bar{s}}  & =  \sum_{j=1}^{3} \left( \bar{K}_{j}^{-1}m_{j} + \bar{u}_{j} \right)\frac{1}{2} \mathbf{B}_{j}  \tilde{\boldsymbol \mu}   -\frac{\partial \mathbf{d}_{3}}{\partial \mathbf{q}} ^{T}  \bar{\boldsymbol \Gamma} / \phi^{2},
\end{align}
\end{subequations}
where $\bar{\boldsymbol \Gamma}= \mathbf{F}l^{2}/EI$ and $\phi=l \hat{u}_{1}$ are the associated dimensionless system parameters. In the present cantilever setup with intrinsic planar curvatures under a vertical concentrated tip load, the vector $\bar{\boldsymbol \Gamma}$ is given by $\left[ 0, -\Gamma, 0 \right]^{T} $, where $\Gamma= \frac{F_{2} l^{2}}{E I }$ and the dimensionless intrinsic curvature $\mathbf{\bar{u}}$ is given by $(1,0,0)$. The boundary conditions are
\begin{subequations}
\begin{align}
\mathbf{q}(0)= \bigg[ 0 , 0, \sin \frac{\theta}{2}, \cos \frac{\theta}{2}\bigg]^{T}, \qquad 
 \tilde{m}_{i}(l)= \tilde{\boldsymbol \mu}(l) \cdot \mathbf{B}_{i} \mathbf{q}(l)/2 =0, \quad i=1,2,3.
\end{align}
\end{subequations}

\begin{figure}[t]
    \centering
    \includegraphics[width=0.99\textwidth]{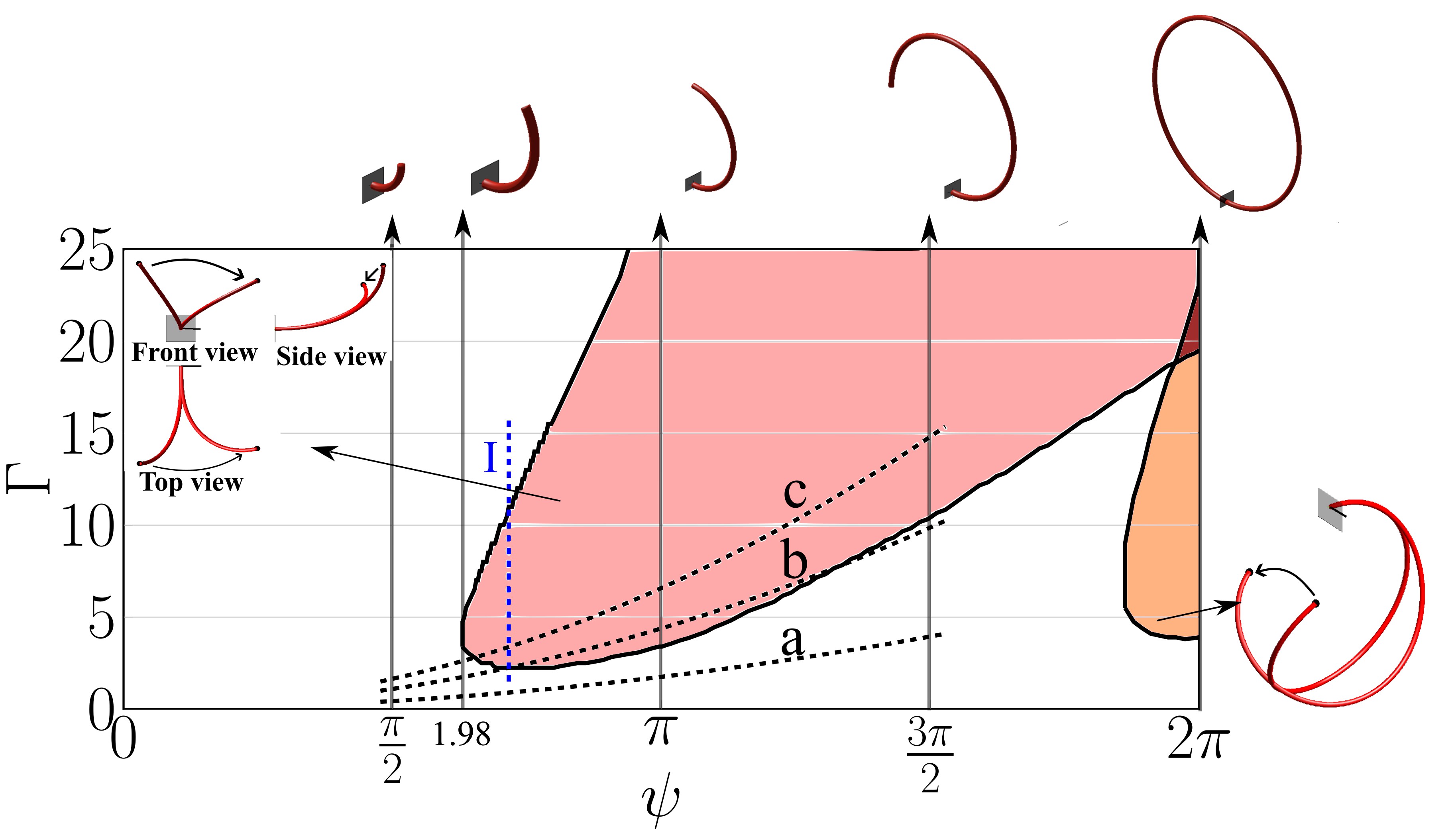}
    \caption{The $\Gamma - \psi $ region over which the snap-back instability arises when the clamped end is rotated from $-\pi$ to $\pi$. The undeformed shapes corresponding to different $\psi$ are indicated above the plot. The paths taken by different control maneuver namely tuning $F_{2}$ at fixed $l$ is denoted by curve I and tuning $l$ at fixed $F_{2}$ denoted by the curves a,b and c. In most of the shaded region, snap-back instability occurs around $\theta=0$. In a smaller portion near $\psi=2\pi$ (depicted in a different shade), snap-back instability occurs around $\theta=\pi$. Illustrations of the snapping motion for these cases are also indicated.}
    \label{fig:figure7}
\end{figure}
 \begin{figure}[t]
    \centering
     \includegraphics[width=\textwidth]{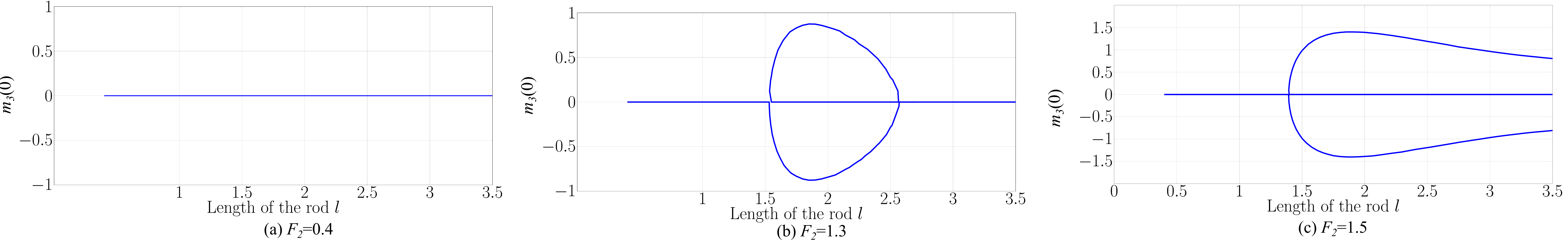}
    \caption{Bifurcation diagrams on the plane $\theta=0$ for increasing values of $l$ at different tip loads.}
    \label{fig:figure8}
\end{figure}

 The equation~\eqref{eqn:Ham_rod_equilibria_a} relates the rod's centerline $\mathbf{r}$ with quaternions $\mathbf{q}$, while the other term~\eqref{eqn:Ham_rod_equilibria_b} yields the constant internal force $(\mathbf{n}(s)=[0,-F_{2},0]^{T})$, and their governing ODEs are disregarded for this analysis. The parameter $\Gamma$ represents the scale of the cantilever system and the applied load, whereas $\psi$ represents the dimensionless curvature and is the angle formed by the circular arc at its center. We perform a continuation in the parameter $\theta$ for a complete rotation, considering different values of $\psi$ and $\Gamma$ and assess if the hysteresis region exists. Figure~\ref{fig:figure7} depicts the $\Gamma - \psi$ space where the hysteresis region for a rotating cantilever exists. In this analysis, we restrict our consideration to values of $\psi$ up to $2\pi$, representing a complete circle turn, while neglecting any instances of self-contact. The unstable modes emerge only for values of $\psi > 1.985 $. The shape of this diagram is influenced by the Poisson ratio $\nu$ of the material, which is set to an incompressible case of $0.5$ in this study. There are two distinct hysteresis regions, each represented by different shades. One region, occupying the majority of the plot, corresponds to the unstable equilibria near $\theta=0$, as indicated. The other region, a small portion located around $\psi=2\pi$, represents the unstable equilibria around $\theta=\pi$. There is a small portion on the top right where unstable equilibria occur around both $\theta=0$ and $\theta=\pi$. A key takeaway from this diagram is that there are areas where the hysteresis can be manipulated by adjusting some parameters in both upward and downward directions. One example, discussed earlier in Figure~\ref{fig:figure6}, shows how hysteresis can be adjusted by increasing or decreasing the value of $F_{2}$, with the path taken by this case at fixed $l$ indicated by the vertical line I on the $\Gamma-\psi$ plot. Another control instance involves keeping $F_{2}$ fixed while varying the length $l$. To explore this approach, we analyze the bifurcation diagrams obtained by slicing the bifurcation surface at $\theta = 0$ plane for different values of $F_{2}$, as shown in Figure~\ref{fig:figure8}. For smaller tip loads, such as $F_{2}=0.4$, no hysteresis is observed for $l \in [0.5,3.5]$. As $F_{2}$ is increased to $1.3$, hysteresis behavior emerges intermediate values of $l$. A further increase in $F_{2}$ to $1.5$ results in hysteresis for all $l > 1.4$. These control paths take the form of parabolas $(\Gamma= \frac{F_{2}l^{2}}{EI})$, as indicated by the dotted curves a,b, and c in the $\Gamma-\psi$ plot in Figure~\ref{fig:figure7}. In conclusion, the tip load $F_{2}$ can be adjusted to create various control scenarios, where the hysteresis region either does not appear, appears only for intermediate values, or emerges for all values exceeding a critical threshold. This selective range of values, for which hysteresis can be switched on or off, holds significant potential for applications in the design of soft robotic arms. From an engineering perspective, the parameters $l$ or $F_{2}$ can be externally tuned. Another example of a control mechanism is by tuning the intrinsic curvature, which is feasible in active elastic rods~\cite{KACZMARSKI2022104918, Kazmarcki2024}.

\begin{figure}[t!]
 \centering
    \includegraphics[width=0.85\textwidth]{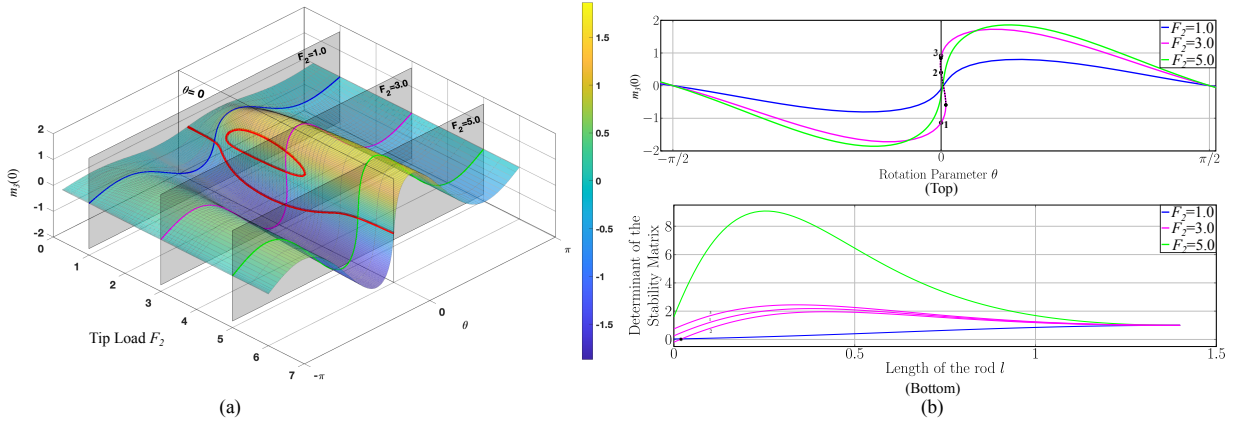}
 \caption{The torsion $\hat{u}_{3}$ acts as a symmetry-breaking parameter. The bifurcation surface sliced by planes $\theta=0$, $F_{2}=1.0, F_{2}=3.0$, and $F_{2}=5.0$. (b) The bifurcation diagrams for $F_{2}=1.0$ and $F_{2}=5.0$ have no folds, while that for $F_{2}=5.0$ has a fold. The equilibria corresponding to $\theta=0$ in all these cases are chosen for stability analysis, with labels assigned for the folded case of $F_{2}=3.0$. (c) Conjugate point computations for the equilibria at $\theta=0$. The determinant for the equilibrium with label $2$ vanishes, indicating a conjugate point, and therefore, is unstable. The remaining equilibria exhibit no conjugate points and are stable.}
 \label{fig:figure9}
\end{figure}

\subsection{Effect of Torsion Component or Load Arm}
Next, we examine the influence of other parameters, such as the torsion component $\hat{u}_{3}$ and the load arm $\boldsymbol \Delta$, on the hysteresis behavior. Initially, we tune the component $\hat{u}_{3}$ from $0$ to a small non-zero value of $0.01$, while keeping the remaining parameters fixed as in the prior case $(l=1.4, \hat{u}_{1}=1.5, \Delta=(0,0,0))$. An analysis is then performed by varying $\theta$ at incrementally increasing values of $F_{2} \in [0,6.0]$, and the associated features are displayed in Figure~\ref{fig:figure9}. The symmetric surface in Figure~\ref{fig:figure6} transforms into a non-symmetric bifurcation surface in Figure~\ref{fig:figure9}a, clearly evident when the $\theta=0$ plane  bisects it, revealing two disconnected, non-symmetric curves (in red). The planar bifurcation diagrams obtained by bisecting the planes $F_{2}=1.0,3.0$ and $5.0$ is shown in Figure~\ref{fig:figure9}b. The plots corresponding to $F_{2}=1.0$ and $F_{2}=5.0$ do not exhibit any folds, while the bifurcation diagram for $F_{2}=3.0$ features a fold. The equilibrium at $\theta=0$, placed between the folds (labeled $2$), is unstable as it possesses a conjugate point, as shown in Figure~\ref{fig:figure9}c. Unlike the previous case where $\hat{u}_{3}=0$ in Figure~\ref{fig:figure5}a, the conjugate point test plots associated with $\theta=0$, labelled $1$ and $3$ do not coincide here. This is because these configurations are no longer mirror images about $\theta=0$. Additionally, the position of the folds is slightly shifted to the right, and the $\theta=0$ line is no longer centered between the folds.

\begin{figure}[t!]
 \centering
 \includegraphics[width=\textwidth]{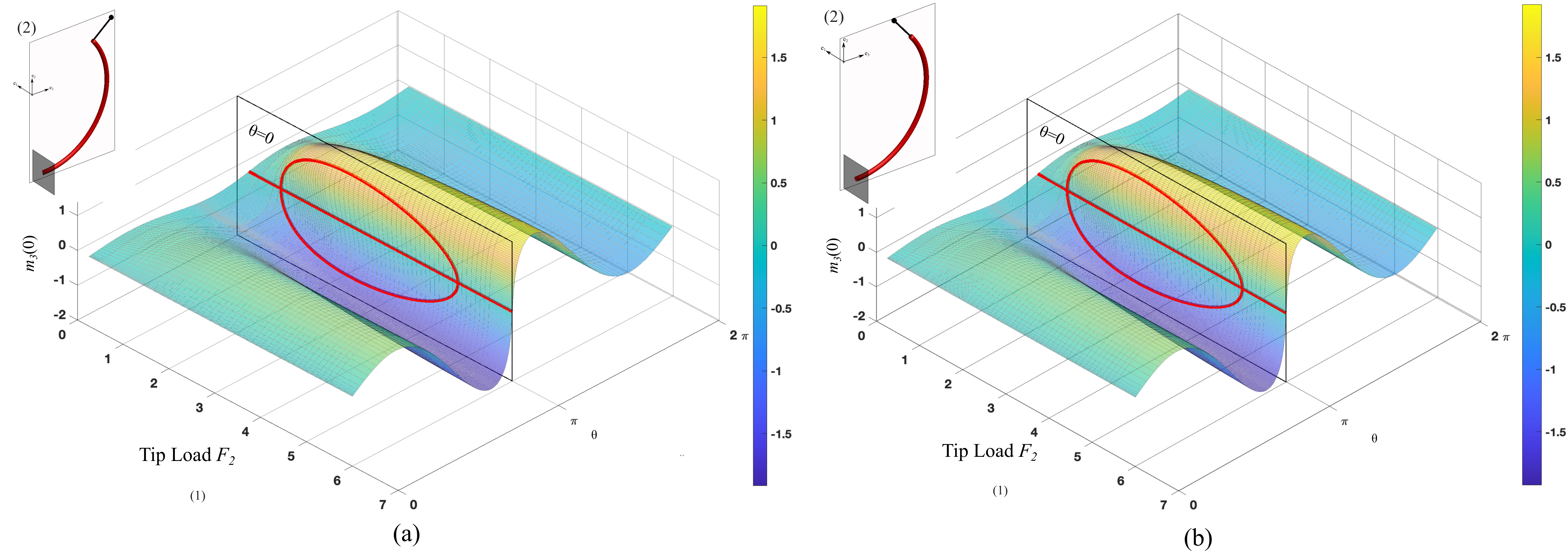}
 \caption{Bifurcation surfaces for non-zero in-plane arm components $\Delta_{2},\Delta_{3}$. (a) $l=1.4$, $\boldsymbol \Delta=(0,0.01,0)$ and $\mathtt{\hat{u}}=(1.5,0,0)$. (b)$ l=1.4$  $\boldsymbol \Delta=(0,0,0.01)$ and $\mathtt{\hat{u}}=(1.5,0,0)$. The schematic of elastic rods with the lever arm in the undeformed configuration is shown at the top left. The surface exhibits similar qualitative behavior to that of $\boldsymbol \Delta=(0,0,0)$.}
 \label{fig:figure10}
\end{figure}

\begin{figure}[t!]
 \centering
    \includegraphics[width=0.85\textwidth]{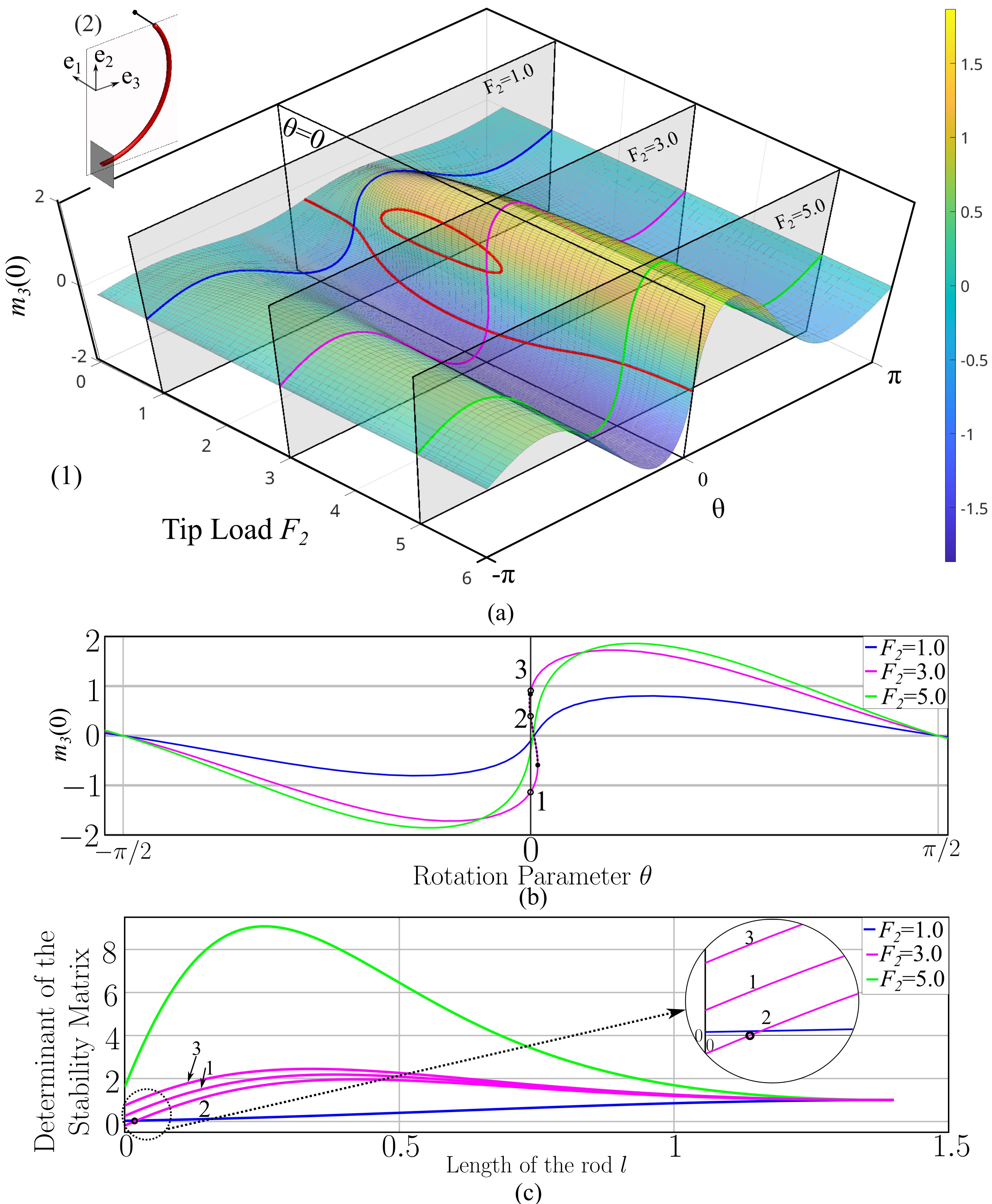}
 \caption{The non-planar arm component $\Delta_{1}$ acts as a symmetry-breaking agent. (a) The bifurcation surface sliced by planes at $\theta=0$, $F_{2}=1.0, F_{2}=3.0$, and $F_{2}=5.0$.  (b)  Planar bifurcation plots at $F_{2}=1.0,3.0$ and $5.0$. The plot for $F_{2}=3.0$ has two folds, while others are unfolded. (c) Conjugate point computations for the equilibria at $\theta=0$. The determinant for the equilibrium $2$ vanishes, indicating a conjugate point, and therefore, is unstable. Other equilibria have no conjugate points and are stable.}
 \label{fig:figure11}
\end{figure}

We now focus on the effect of the load arm $\boldsymbol \Delta$, which consists of three components $\Delta_{1},\Delta_{2}$ and $\Delta_{3}$. The centerlines of an unstressed uniform rod with intrinsic curvature $\mathtt{\hat{u}}=(\hat{u}_{1},0,0)$ are planar and lie in the $\mathbf{d}_{2}-\mathbf{d}_{3}$ plane. Consequently, a non-zero $\Delta_{2}$ or $\Delta_{3}$ results in the arm lying in this plane. These two parameters generate bifurcation surfaces, shown in Figure~\ref{fig:figure10}, which qualitatively resemble those in Figure~\ref{fig:figure6} and indicate symmetric behavior about $\theta=0$. In contrast, a non-zero $\Delta_{1}$ results in the arm non-planar with the undeformed rod, introducing asymmetry into the system, as demonstrated by its bifurcation surfaces corresponding to $\Delta_{1}=0.01$ in Figure~\ref{fig:figure11}. Figure~\ref{fig:figure12} presents a comparison among the cases $\boldsymbol \Delta=(\pm0.01,0,0)$, $\boldsymbol \Delta=(0,\pm0.01,0)$, and $\boldsymbol \Delta=(0,0,\pm0.01)$ using plots from the $\theta=0$ plane bisecting the bifurcation surfaces. In the present case, a positive component $\Delta_{2}$ shrinks the band of hysteresis (the distance between bifurcations), while a negative component expands it. On the other hand, a positive $\Delta_{3}$ shrinks the band, while its negative component expands it. From a technical perspective, a tunable arm $\boldsymbol \Delta$ can be easily designed and controlled using thermal expansion or linear motors, enabling the snapping behavior to be activated or deactivated as needed. The non-planar component $\Delta_{1}$ inverts the bifurcation diagram for its negative component. Thus, the load arm $\boldsymbol \Delta$ can stabilize or destabilize some cantilever equilibria. In conclusion, the parameters $\hat{u}_{3}$ and $\Delta_{1}$ induce symmetry-breaking in the bifurcation surfaces, while the components $\Delta_{2}$ and $\Delta_{3}$ quantitatively vary the bifurcation surfaces without altering their qualitative characteristics.

\begin{figure}[t!]
 \centering
    \includegraphics[width=\textwidth]{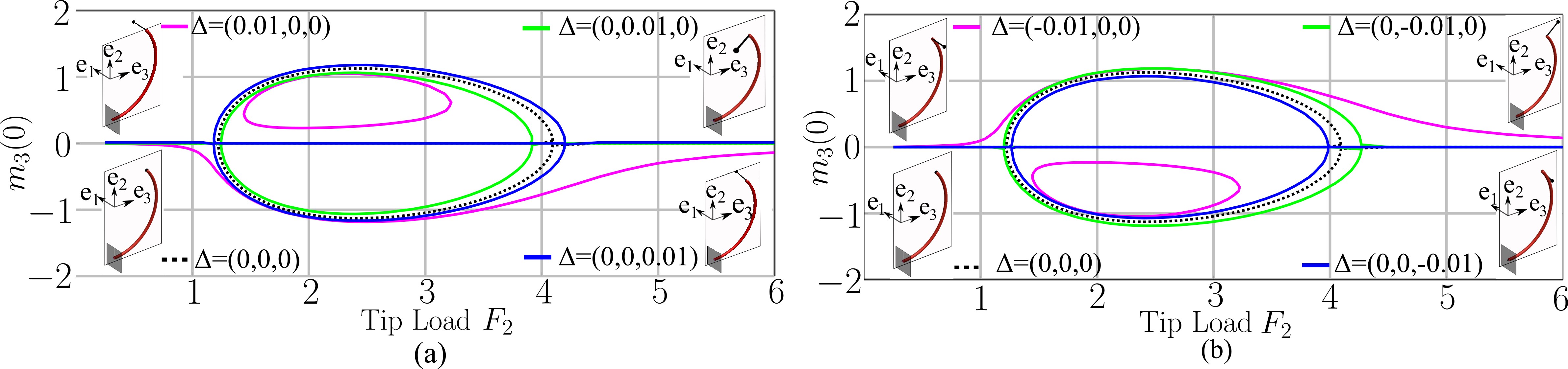}
 \caption{The effect of different load arm components $\Delta_{1},\Delta_{2}$ and $\Delta_{3}$ on the hysteresis behavior. The plots for the case of $\Delta=(0,0,0)$ are also shown in dotted lines. (a) Positive  $\Delta_{1},\Delta_{2}$ and $\Delta_{3}$. (b)Negative $\Delta_{1},\Delta_{2}$ and $\Delta_{3}$.}
 \label{fig:figure12}
\end{figure}

\begin{figure}[t!]
 \centering
    \includegraphics[width=\textwidth]{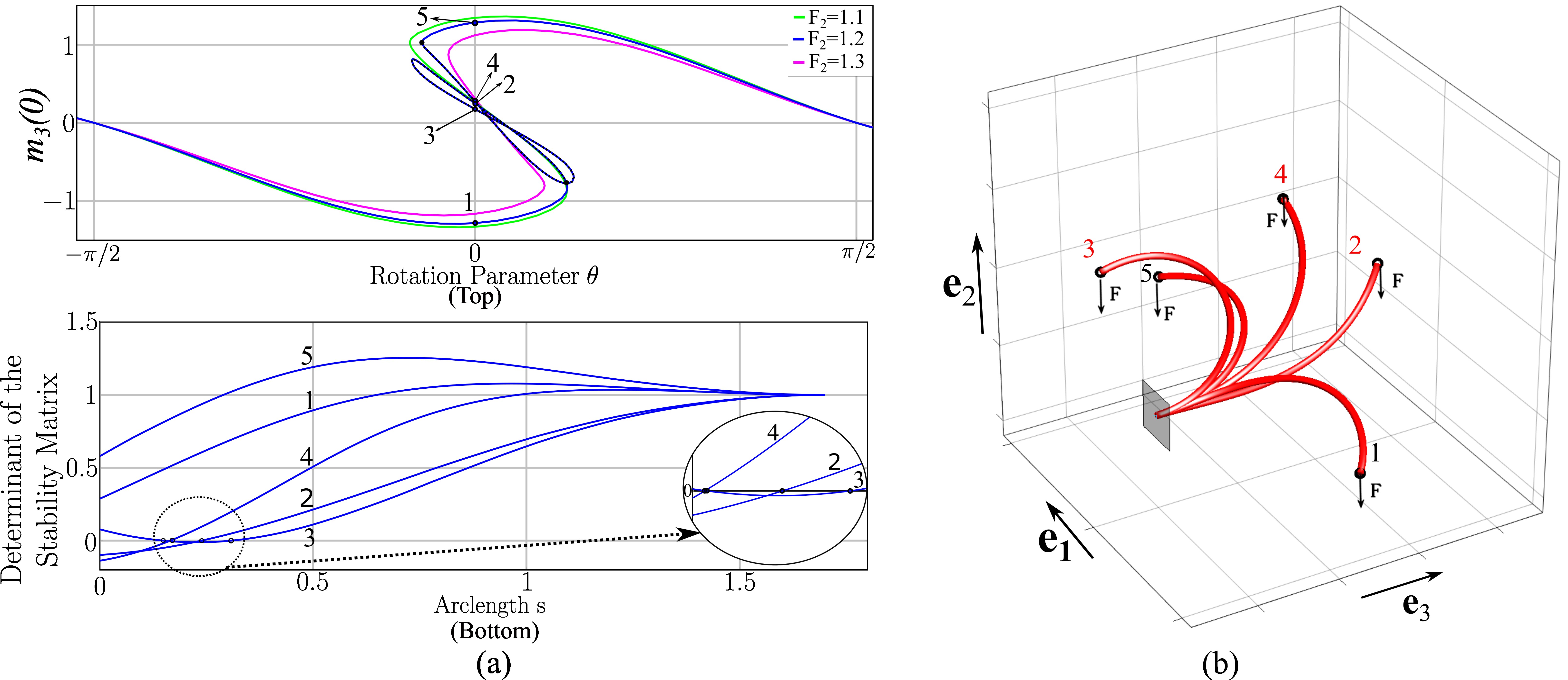}
    \caption{(a) (Top) Bifurcation diagrams for $\mathtt{\hat{u}}=(1.5,0,0.1)$,  $l=1.7$, and $\boldsymbol \Delta=(0,0,0)$ at $F_{2}=1.1,1.2$ and $1.3$. (Bottom) Conjugate point computations for equilibria at $\theta=0$ for $F_{2}=1.2$, which has a bifurcation diagram with four folds. The determinants corresponding to labels $2$ and $4$ vanish at one point (one conjugate point), while the determinant corresponding to $3$ vanishes twice (two conjugate points). The remaining equilibria ($1$ and $5$) have no conjugate points. (b) Equilibrium configurations corresponding to $\theta=0$ for $F_{2}=1.2$. The tip load is represented by a solid dot, with its direction indicated by arrows. The number of conjugate points is indicated adjacent to them. Only the equilibria with zero conjugate points (labels $1$ and $5$) are stable and can exist realistically.}
    \label{fig:figure13}
 \end{figure}

 \begin{figure}[t!]
 \centering
    \includegraphics[width=\textwidth]{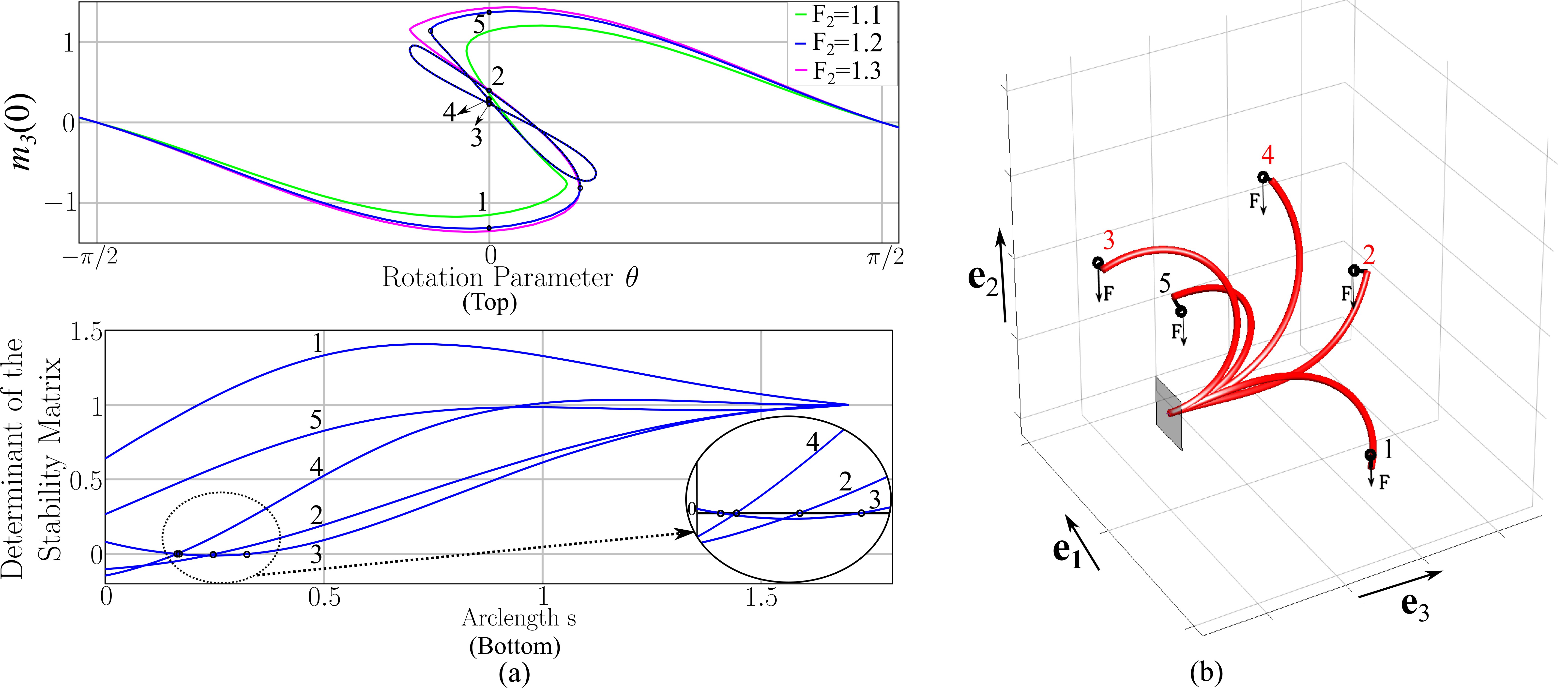}
    \caption{(a) (Top) Bifurcation diagrams for $\boldsymbol \Delta=(0.1,0,0)$, $l=1.7$, and $\mathtt{\hat{u}}=(1.5,0,0.0)$ at $F_{2}=1.1,1.2$ and $1.3$. (Bottom) Conjugate point computations for equilibria at $\theta=0$ for $F_{2}=1.2$ which has a bifurcation diagram with four folds. The determinants corresponding to labels $2$ and $4$ vanish at one point (one conjugate point), while the determinant corresponding to $3$ vanishes twice (two conjugate points). The remaining equilibria ($1$ and $5$) have no conjugate points. (b) Equilibrium configurations corresponding to the $\theta=0$ for $F_{2}=1.2$. The tip load is represented by a solid dot, with its direction indicated by arrows. The number of conjugate points is indicated adjacent to them. Only the equilibria with zero conjugate points (labels $1$ and $5$) are stable and can exist realistically.}
 \label{fig:figure14}
 \end{figure}

So far, our analysis has mainly focused on cases that exhibit either zero or one conjugate point. However, higher values of $l$, $\hat{u}_{3}$ and $\Delta_{1}$ may yield equilibria with more than one conjugate point. For example, let us increase $l$ to $1.7$ and $\hat{u}_{3}$ to $0.1$ and perform the continuation in $\theta$ for incrementing values of $F_{2}$ within the range $[0,6.0]$ at $\hat{u}_{1}=1.5$ and $\boldsymbol \Delta=(0,0,0)$. For intermediate values of $F_{2}$, we observe four folds, as depicted in Figure~\ref{fig:figure13}a. In this analysis, we focus solely on this region of interest and do not present the full bifurcation surface. The number of folds increases from two to four as $F_{2}$ increases from $1.1 $ to $1.2 $ and then reduces back to two when $F_{2}$ increases further to $1.3$. The five equilibrium configurations corresponding to $\theta=0$ on a plot with four folds are depicted in Figure~\ref{fig:figure13}b, along with their stability characteristics. The equilibrium labeled $3$ has two conjugate points, whereas the equilibria $2$ and $4$ have each one conjugate point and are unstable. The number of conjugate points increases or decreases by one at each fold. 

A similar analysis is conducted for parameters $l=1.7$ and $\Delta_{1}=0.1$, with continuation performed for incrementing values of $F_{2} \in [0,6.0]$ at $\mathtt{\hat{u}}=(1.5,0,0)$. The observed qualitative behavior remains consistent with the previous case. Five equilibrium configurations for $\theta = 0$, shown on the curve with four folds are analyzed for their stability, as depicted in Figure~\ref{fig:figure14}.  The parameters $\hat{u}_{3}$ and $\Delta_{1}$ exhibit similar qualitative effects on the hysteresis behavior of the cantilever system. Moreover, the last two examples in this section illustrate how certain parameters significantly influence the nonlinear behavior of cantilevers with intrinsic curvatures. Typically, near a fold, a stable equilibrium becomes unstable. However, an unstable equilibrium may either become stable or transition to a higher unstable mode with more than one conjugate point. In these two examples, we observe all possible stability transitions at the folds: from stable to unstable, from unstable to a higher unstable mode, back to a lower unstable mode, and then to stable at successive folds. There are studies relating the direction of these transitions near the folds through distinguished bifurcation diagrams~\cite{Maddocks1987, Hoffman2005}. However, these results are limited to cases with homogeneous boundary conditions or fixed-fixed boundary conditions, and their application to the current case of fixed-free boundary conditions requires further investigation. In our studies, conjugate point tests successfully captured stability transitions at folds.

\section{Summary and Discussion}
\label{sec:s5}
The Jacobi condition has been generalized to analyze the critical points of variational problems with fixed-free ends. The literature on the necessary and sufficient conditions for this set of problems is relatively sparse. For this analysis, the standard definition of conjugate points is slightly modified. This theory was developed keeping in mind the applications relevant to the rapidly advancing soft robots. The equilibria of tip-loaded cantilevers, which mimic flexible soft robotic arms, were computed using the Hamiltonian formulation, and their stability was analyzed by computing conjugate points. The Jacobi equations were shooted as IVPs from the free end towards the fixed end to evaluate conjugate points. The role of intrinsic curvature in generating the nonlinear behavior of elastic rods was particularly emphasized through numerical examples. A flexible, intrinsically curved elastic rod is subjected to a quasi-static rotation at one end and a vertical tip load at the other. Depending on system parameters, there are two possible outcomes: the tip either traces a smooth, continuous curve, or it traces a discontinuous curve due to intermediate unstable equilibria, exhibiting snap-back instability. Surprisingly, the hysteresis behavior displayed a complex dependence on tip load, length, and intrinsic curvature. For example, the hysteresis behavior displayed non-monotonic characteristics for a specific combination of parameters. An initial increase in tip load beyond a critical value led to the onset of hysteresis. But when the load is increased beyond a second critical value, the hysteresis behavior vanished. This intricate dependence on parameters was numerically represented through a non-dimensionless plot. Furthermore, the impact of the load arm in stabilizing or destabilizing
the rod equilibria was discussed. These findings are valuable for the design of innovative devices that can be used as switches or triggers when operated near snapping region. By employing functional materials in the cantilever structure, stimuli such as heat, light, chemicals, electric field, and magnetic field can be used to tune its parameters~\cite{shen2020stimuli}. This investigation can be further extended to cases involving distributed loads from gravity~\cite{Miller2014}, as well as multiphysics coupling effects due to light~\cite{Goriely2023}, electrostatics~\cite{singh2022computational}, and magnetism~\cite{avatar2024kirchhoff}. Additionally, isoperimetric constraints~\cite{Manning1998}, where the free end is fixed in position but free to rotate, may also be considered in future studies.

Generally, folds in continuation solutions indicate an exchange of stability. In all examples, the stability transitions at the folds align with the conjugate point tests. Although, the direction of change is unknown, the $2 \pi$ - periodicity of the cantilever system in the rotation parameter, and the information of folds may aid in predicting the stability when there are just two folds. However, one solution along the continued solutions must be analyzed for stability and should correspond to the stable equilibrium to effectively implement this strategy. Moreover, the stability prediction based solely on the folds may fail when more than two folds occur consecutively, as seen in some examples. In these scenarios, stability can be deduced through conjugate point tests. Distinguished bifurcation diagrams~\cite{Maddocks1987} are another useful tool to determine the direction of the stability change in variational problems, and they must be generalized to fixed-free boundary conditions for their use. Several studies have been conducted relating the number of conjugate points to the Morse index~\cite{Manning1998, Hoffman2002}, the maximal dimension of the subspace over which the second variation is negative definite. The combination of Legendre's strengthened condition, Sturm-Liouville problem, and Rayleigh quotients may allow this extension to the cases with fixed-free ends.

\section*{Acknowledgments}
I thank Prof. John Maddocks for fruitful discussions and for sharing his extensive knowledge of variational principles and elastic rods. This work would not have been possible without his guidance. I also thank Prof. Raushan Singh for his insightful comments on the initial drafts of this work. This study was funded by the Einstein Foundation Berlin and the Deutsche Forschungsgemeinschaft (DFG, German Research Foundation) under Germany's Excellence Strategy – The Berlin Mathematics Research Center MATH+ (EXC-2046/1, project ID: 390685689).

\enlargethispage{20pt}

\vskip2pc





\clearpage

\bibliographystyle{elsarticle-num}
\bibliography{Arxiv}







\end{document}